\newtheorem{definition}{Definition}
\newtheorem{theorem}{Theorem}
\newtheorem{pro}{Proposition}
\newtheorem{cor}{Corollary}
\newtheorem{remark}{Remark}
\begin{document}
	
   \title{Volterra mortality model: Actuarial valuation and risk management with long-range dependence}

\author{Ling Wang\thanks{Department of Statistics, The Chinese University of Hong Kong, Shatin, N.T., Hong Kong. \newline({\tt lingwang@link.cuhk.edu.hk})}
	 \and Mei Choi Chiu\thanks{Department of Mathematics \& Information Technology, The Education University of Hong Kong, Tai Po, N.T., Hong Kong. \newline({\tt mcchiu@eduhk.hk})}	 \and Hoi Ying Wong\thanks{Corresponding author. Department of Statistics, The Chinese University of Hong Kong, Shatin, N.T., Hong Kong. \newline({\tt hywong@cuhk.edu.hk})}
 }
\date{\today}

    \maketitle \pagestyle{plain} \pagenumbering{arabic}

\abstract{While abundant empirical studies support the long-range dependence (LRD) of mortality rates, the corresponding impact on mortality securities are largely unknown due to the lack of appropriate tractable models for valuation and risk management \textcolor{black}{purposes}. We propose a novel class of Volterra mortality models that incorporate LRD into the actuarial valuation, retain tractability, and are consistent with the existing continuous-time affine mortality models. We derive the survival probability in closed-form solution by taking into account of the historical health records. The flexibility and tractability of the models make them useful in valuing mortality-related products such as death benefits, annuities, longevity bonds, and many others, as well as offering optimal mean-variance mortality hedging rules.  Numerical studies are conducted to examine the effect of incorporating LRD into mortality rates on various insurance products and hedging efficiency.}

	\vspace{2mm} {\em Keywords} : Stochastic mortality; Long-range dependence; Affine Volterra processes; Valuation; Mean-variance hedging. \\
	
\clearpage
\section{Introduction}
Actuaries heavily rely on mortality modeling for mortality prediction, actuarial valuation, and risk management. Accurate estimations and predictions of human mortality are the essential building blocks of both insurance contract pricing and pension policy. The first study of this can be dated back to
\cite{GB}. 

The arguably most well-received modern mortality model is the \cite{LC} model and its extensions using  time series analysis. For instance, it has been generalized to multivariate populations with a common trend
\citep{LL}, mortality forecasts using single value decomposition \citep{RH}, joint modeling of different national populations \citep{ABO} and sub-populations \citep{VH}, a multi-population stochastic mortality model \citep{DHM}, a Poisson regression model \citep{BDV}, and stochastic period and cohort effect \citep{TPF}, among others.  A key advantage of the Lee-Carter model and its \textcolor{black}{invariant} is that statistical inferences from time series analysis can be applied or generalized to estimate and test with a real mortality data set. 

By incorporating fractionally integrated time series analysis into the Lee-Carter model, \cite{YPC} empirically show the existence of long-range dependence (LRD) (also known as long-memory pattern or fractional persistence) across age groups, gender, and countries by using the dataset of 16 countries. When they apply their long-memory mortality model to forecast life expectancies, the mortality model ignoring LRD tends to underestimate life expectancy, which leads to important implications for pension schemes and funding issues.  \cite{YPCM} further extend the model to incorporate multivariate cohorts and document the existence of LRD. \cite{YGA} show a long-memory pattern in the infant mortality rates of G7 countries. \cite{DO} offer further empirical evidence that mortality rates exhibit LRD using a fractional Ornstein-Uhlenbeck (fOU) process with Italian population data from the 1950 to 2004 period.

Most stochastic mortality models focus on the mortality rate, or equivalently the Poisson intensity rate. We refer to the pioneering work of \cite{MP} who introduced the Cox model to insurance applications.  \cite{Biffis} and \cite{BM} further develop this idea of doubly stochastic mortality models with an affine feature for exploiting analytical tractability in actuarial valuation with both financial and mortality risks. \cite{JPE} extend it to cohort models, and \cite{WCW} introduce continuous-time cointegration into the multivariate mortality rates. 

\cite{BS} advocate the use of continuous-time affine mortality  models for longevity pricing and hedging because of its tractability and consistency with the market data. \cite{JP} propose a calibration to the multiple populations affine mortality models and demonstrate its empirical use with product price data. However, none of the aforementioned studies provide an analytically tractable dynamic mortality model with the LRD feature.

The primary contribution of this paper is the proposal of a novel class of dynamic stochastic mortality models that simultaneously render actuarial valuation tractability and the LRD property. As the proposed model is based on Volterra processes, we call them Volterra mortality models. Inspired by the affine Volterra process \citep{Jaber}, our model preserves the affine structure for general actuarial valuation but still captures LRD. In terms of practical contributions, we use the model to derive closed-form solutions for the survival probability, death and survival benefits of insurance contracts, and longevity bonds, and then address the impact of LRD on these insurance products. To the best of our knowledge, the derived formulas constitute the first set of formulas for insurance products that are subject to the LRD feature of mortality rates. 

This study also contributes to risk management with LRD mortality rates. We rigorously develop the mean-variance (MV) strategy for hedging longevity risk with a longevity security that is subject to LRD. This later hedging strategy is highly non-trivial because the Volterra mortality rate is a non-Markovian and non-semimartingale process. Inspired by \cite{HW}, we derive the MV optimal hedging with the Volterra mortality models by means of linear-quadratic control with the backward stochastic differential equation (BSDE) framework similar to \cite{WCW}. In contrast, \cite{HW} solve the MV portfolio problem with rough volatility by constructing an auxiliary process. Our optimal hedging rule shows how to adjust the hedge for LRD of mortality rates.

The rest of this paper is organized as follows. Section \ref{Model} introduces the Volterra mortality model based on the doubly stochastic mortality models and explains how the model captures LRD.  Section \ref{Valuation} offers some formulas for actuarial valuation.   In Section \ref{Hedge}, we formulate an optimal hedging problem under the Volterra mortality model and give an explicit solution. To compare the Volterra mortality model with LRD with the Markovian mortality model, numerical studies are conducted for both actuarial valuation and the hedging problem in Section \ref{numerical}. Section \ref{Conclusion} gives our concluding remarks. Some details and additional proofs are given in the Appendix. 

\section{The model} \label{Model}
Consider a filtered probability space $(\Omega, \mathcal{F},\mathbb{F},\mathbb{P})$ where the filtration $\mathbb{F}= \{\mathcal{F}_t:0\leq t \leq T\}$ satisfies the usual properties. We write $\mathcal{F}_t=\mathcal{G}_t\vee\mathcal{H}_t$, where $\mathcal{H}_t$ represents the flow of information available as time goes by including the historical processes and the current states, and $\mathcal{G}_t$ contains the information whether an individual has died. We interpret $\mathbb{P}$ as the physical probability measure. Alternatively, our model can be developed under a pricing measure so that the model parameters are calibrated to the insurance product prices available in the market. This enables actuarial valuation consistent with market prices. However, risk management strategies should be conducted under the physical probability measure. To avoid confusion, we denote the pricing measure by $\mathbb{Q}$ and discuss the relationship between $\mathbb{P}$ and $\mathbb{Q}$ in the next section. For the time being, we focus on the model development under $\mathbb{P}$.

We begin with the classic doubly stochastic mortality models. For simplicity, we consider a group of people with homogeneous feature while individual differences certainly exist in this group at the same time. A counting process $N$ is a doubly stochastic process driven by the subfiltration $\mathbb{G} = \{{\mathcal G}_t\}_{t\ge 0}$ of $\mathbb{F}$ and with $\mathbb{G}$-intensity $\mu_t$. Let $\tau$ be the first jump-time of the process $N$ with intensity $\mu_t$. In actuarial applications, the process $\{N_t\}_{t\ge 0}$ records the number of deaths at each time $t\geq 0$. For any time $t\geq 0$ and state $\omega \in \Omega$ such that $\tau(\omega)>t$, we have
\begin{eqnarray}
\mathbb{P}(\tau\leq t+\Delta|\mathcal{F}_t)\cong \mu_t(\omega)\Delta,
\end{eqnarray}
for a trajectory of $\mu_t(\omega)$ and a fixed $\omega\in\Omega$. Thus, the counting process $N$ associated with $\tau$ becomes an inhomogeneous Poisson with parameter $\int_{0}^{\cdot}\mu_s(\omega)ds$. In other words, for all $T\geq t \geq 0$ and integer $k$ ($k\geq 0$), we have
\[\mathbb{P}(N_T-N_t=k|\mathcal{F}_t\vee \mathcal{G}_T) = \frac{(\int_{t}^{T}\mu_s(\omega)ds)^k}{k!}e^{-\int_{t}^{T}\mu_s(\omega)ds}.\]
By the law of iterated expectations, the time-$t$ survival probabilities over the time interval $(t,T]$ (for fixed $T\geq t \geq 0$) can be expressed as follows:
\begin{equation}\label{Conditional S}
	\mathbb{P}(\tau>T|\mathcal{F}_t)=\mathbb{E}\left[\left.e^{-\int_{t}^{T}\mu_s(\omega)ds}\right|\mathcal{F}_t\right].
\end{equation}
If the intensity $\mu_t$ is a constant, then the doubly stochastic process reduces to the homogeneous Poisson process. However, the literature of mortality modeling is in favour of a stochastic intensity. Typically, the intensity is modeled through a stochastic differential equation (SDE). For instance, \cite{Biffis} and \cite{BM} postulate a Markovian process such that $\mu_t = f(X_t)$, where $f$ is a continuous function on $\mathbb{R}$,
\begin{eqnarray}
dX_t =  b(X_t)dt + \sigma(X_t)dW_t, \label{SDE}
\end{eqnarray}
and $\{W_t\}_{t\ge 0}$ is the standard Brownian motion. 

To incorporate LRD into the mortality rate, one simply replaces the Brownian motion in \eqref{SDE} with the fractional Brownian motion. In other words,
\begin{eqnarray}
dX_t =  b(X_t)dt + \sigma(X_t)dW_t^H, \label{fSDE}
\end{eqnarray}
where $W_t^H$ is a fractional Brownian motion (fBM) with the Hurst parameter $H \in [0.5, 1)$. For instance, the empirical study of \cite{DO} uses $f(X_t) =h_0{\rm exp}(h_1t+h_2X_t)$, for the constants $h_0,h_1, h_2>0$, and a fOU process in the form of \eqref{fSDE} such that the drift term $b(X_t)$ is a linear function of $X_t$ and the $\sigma(X_t) \equiv \sigma$ is a constant. However, the fractional Brownian motion is analytically intractable for actuarial valuation.

\subsection{Volterra mortality}
We propose a stochastic mortality model incorporating LRD that retains the key advantages of the works of \cite{Biffis}, \cite{DO}, and \cite{LST}. More specifically, we maintain the affine nature of \cite{Biffis}, reflect LRD with fBM as in \cite{DO}, and offers explicit expressions for some important Fourier-Laplace functional generalizing \cite{LST} for actuarial valuation. Our model is highly inspired by the affine Volterra processes \citep{Jaber} and hence called the Volterra mortality model.

In the one dimensional case, \cite{BN} show the equivalence between fBM and the Volterra process:
$$W_t^H = c_H\int_0^t (t-s)^{H-\frac{1}{2}}dW_1(s),$$
where $c_H$ is a constant related to the Hurst parameter $H$, $W_1$ is the Wiener process, and the integral process on the right-handed side is a standard Volterra process. For simplicity and to be consistent with the literature, we postulate the mortality rate $\mu_t$ of a group: 
\begin{equation}
\mu_t  = m(t) + \eta X_t, \label{mu0}
\end{equation}
where $m(t)$ is a bounded continuous deterministic function and $\eta$ is a constant. In other words, we require that $f(X_t)$ is a linear function of $X_t$. In addition, $X_t$ follows a stochastic Volterra integral equation (SVIE): 
\begin{equation}\label{volterra process}
   X_t = X_0 +\int_0^tK(t-s)b(X_s)ds +\int_0^t K(t-s)\sigma(X_s)dW_s,
\end{equation}
where $W = [W_1, \cdots, W_d]^\top$ is the standard $d$-dimensional Brownian motion under $\mathbb{P}$, and the coefficients $b$ and $\sigma$ are assumed to be continuous.  \textcolor{black}{The convolution kernel $K$ satisfies the following condition:
\begin{itemize}
\item[]$K \in L^2_{loc}(\mathbb{R}_+, \mathbb{R})$, $\int_{0}^{h}K(t)^2dt = O(h^\gamma)$ and $\int_{0}^{T}(K(t+h)- K(t))^2dt = O(h^\gamma)$ for some $\gamma \in (0, 2]$ and  every $T < \infty$. 
\end{itemize}
}
Although the process $X_t$ in \eqref{volterra process} is generally high-dimensional, we would like to illustrate it in a one-dimensional case. Table \ref{kernel} exhibits some useful kernels $K$ in the one-dimensional case. We obtain the fBM by choosing $K$ as the fractional kernel in Table \ref{kernel} with a constant $\sigma(X_s)$ and $b=0$ in \eqref{volterra process}. Therefore, the Volterra processes can be applied to a wider class of LRD noise terms. Note that the \textit{resolvent} or \textit{resolvent of the second kind} corresponding to the $K$ shown in Table \ref{kernel} is defined as the kernel $R$  such that $K*R=R*K=K-R$. The convolutions $K*R$ and $R*K$ with $K$ a measurable function on $\mathbb{R}_+$ and $R$ a measure on $\mathbb{R}_+$ of locally bounded variation are defined by 
 \[(K*R)(t) = \int_{[0,t]}K(t-s)R(ds),~~ (R*K)(t) = \int_{[0,t]}R(ds)K(t-s)\]
 for $t>0$.
\textcolor{black}{\begin{remark}
	  According to \cite{Biffis}, the deterministic function $m(t)$ in \eqref{mu0} may represent (i) a best-estimated assumption on $\mu$ enforcing unbiased expectations about the future based on the available information, (ii) pricing demographics basis, or (iii) an available mortality table for a population of insureds. In Section \ref{numerical}, we calibrate $m(t)$ to the table SIM92, a period table usually employed to price assurances.
\end{remark}}
 
\begin{table}[H]
	\centering
	\begin{tabular}{ccccc}  
		\toprule
		\toprule
		&Constant&Fractional&Exponential&Gamma \\
		\midrule
		\midrule
		$K(t)$& $c$ & $c\frac{t^{\alpha-1}}{\Gamma(\alpha)}$ & $ce^{-\lambda t}$  & $ce^{-\lambda t}\frac{t^{\alpha-1}}{\Gamma(\alpha)}$\\
		\midrule
		$R(t)$& $ce^{-ct}$ & $ct^{\alpha-1}E_{\alpha,\alpha}(-ct^{\alpha})$&$ce^{-\lambda t}e^{-ct}$ & $ce^{-\lambda t}t^{\alpha-1}E_{\alpha,\alpha}(-ct^{\alpha})$\\
		\bottomrule
	\end{tabular}
	\caption{Examples of kernel function $K$ and the corresponding resolvent $R$. Here $E_{\alpha, \beta}(z)=\sum_{n=0}^{\infty}\frac{z^{n}}{\Gamma(\alpha n+\beta)}$ denotes the Mittag-Leffler function.}
	\label{kernel}
\end{table}

 In addition, when the convolution kernel $K$ is set to a constant $c$ in \eqref{volterra process}, the $X_t$ reduces to the solution of a SDE. Furthermore, once $b(X_t)$ is linear in $X_t$ and $\sigma(X_t)$ satisfies a certain affine property, then our model in \eqref{volterra process} becomes the affine stochastic mortality model of \cite{Biffis}. The possibly high-dimensional $X_t$ enables us to also incorporate multi-factor mortality modeling. However, we would like to highlight that the Volterra process in \eqref{volterra process} is generally a non-Markovian and non-semimartingale process. The non-Markovian nature is obvious because the integrals in the SIVE take the whole realized sample path into account. The non-semimartingale feature is reflected by the fact that the time variable $t$ appears in both the integral limit and the kernel function, making it fail to define the It\^o integral.

Fortunately, \cite{Jaber} show that it is still possible to maintain the affine nature within \eqref{volterra process}. Let $a(x)=\sigma(x)\sigma(x)^\top$ be the covariance matrix. 
\begin{definition}\label{def1}
The SVIE \eqref{volterra process} is called an affine process \citep{Jaber} if
\[a(x)=A^0+x_1A^1+\cdot+x_dA^d,\]
\[b(x)=b^0+x_1b^1+\cdots+x_db^d,\]
for some $d$-dimensional symmetric matrices $A^i$ and vectors $b^i$. For simplicity, we set $B=(b^1,\cdots, b^d)$ and $A(u)=(uA^1u^\top,\cdots, uA^du^\top)$ for any row vector $u \in \mathbb{C}^d$.
\end{definition}
To draw insights from Definition \ref{def1}, consider the one dimensional case. When $b(x) = b^0 - b^1x$, a linear function of $x$, and $a(x)$ is a constant, \eqref{volterra process} is known as the Volterra type of the {\bf Vasicek} (VV) model which reduces to the classic Vasicek model by taking a constant kernel or, equivalently, $H = 1/2$ in the fractional kernel. 
When $b(x)$ is linear in $x$ and $a(x)$ is directly proportional to $x$, our model in \eqref{volterra process} reduces to the Volterra version of the {\bf CIR} (VCIR) model. 

\subsection{Interest rate model}
Although we focus on mortality modeling, actuarial valuation needs to specify the dynamic of the risk-free interest rate. We simply adopt a Markov affine model for the interest rate.  
Specifically, we adopt the short rate process $r$ that satisfies $\int_{0}^{t}|r_s|ds<\infty$ for $t\geq 0$, and we define the return of a risk-less asset as ${\rm exp}(\int_{0}^{t}r_sds)$ for a unit dollar investment at time 0. In addition, the interest rate process is driven by the Markov affine process $Z$ in $\mathbb{R}^k$:
\begin{eqnarray}
dZ_t=\widetilde{b}(Z_t)dt+\widetilde{\sigma}(Z_t)dW'_t, \label{Z}
\end{eqnarray}
where $W'$ is a $k$-dimensional standard Brownian motion.  The coefficients $\widetilde{b}(Z_t)$ and $\widetilde{a}(Z_t) = \widetilde{\sigma}(Z_t)\widetilde{\sigma}^\top(Z_t)$ have affine dependence on $Z_t$ once they satisify Definition \ref{def1} with the dimension $d$ replaced by $k$. Hence, the Markov affine feature coincides with the definition of Markov affine process in \cite{DFS}.  Furthermore, the short rate $r_t\doteq r(t,Z_t)=\lambda_0(t)+\lambda_1(t)\cdot Z_t$  which is an affine function on $Z_t$ with coefficients $\lambda_0(t)$ and $\lambda_1(t)$ being bounded continuous functions on $[0,\infty)$. By the affine processes in \cite{DFS} and  \cite{F}, at time $t$, we have
\begin{equation}\label{affineMarkov}
\mathcal{B}(t,T)=\mathbb{E}\left[\left.e^{-\int_{t}^{T}r(s,Z_s)ds}\right|\mathcal{F}_t\right]=e^{\widetilde{\alpha}(t, T)+\widetilde{\beta}(t, T)\cdot Z_t},
\end{equation} 
where the functions $\tilde{\alpha}(\cdot, T)$ and $\tilde{\beta}(\cdot,  T)$ are uniquely solved from the ordinary differential equations (ODEs) in Appendix \ref{appendix:affine} with boundary conditions $\widetilde{\alpha}(T, T)=0$ and $\widetilde{\beta}(T, T)=0$. If the interest rate model in \eqref{Z} is defined under the pricing measure, i.e., $\mathbb{P} = \mathbb{Q}$, then the quantity $\mathcal{B}(t,T)$ represents the price of a unit zero coupon bond.

\section{Actuarial Valuation}\label{Valuation}
We demonstrate the tractability of the proposed Volterra mortality model in actuarial valuation. Specifically, we derive closed-form solutions to the survival probability and prices of some standard life insurance products.  The following theorem is the building block of the actuarial valuation.
\begin{theorem}\label{theorem1}
	If the mortality rate $\mu_t$ follows \eqref{mu0} and \eqref{volterra process} and has the affine structure specified in Definition \ref{def1}, then, for any constant $c_0$ and $c_1$ and $T>t$, we have
	\begin{align}\label{eq}
		\mathbb{E}\left[\left.e^{-\int_{t}^{T}\mu_sds}(c_0+c_1\mu_T)\right|\mathcal{F}_t^X\right]=	c_0g(t,T)- c_1\frac{\partial g(t, T)}{\partial T},
	\end{align}
where
\begin{eqnarray}
g(t, T)&=&e^{-\int_{0}^{T}m(s)ds}e^{\int_{0}^{t}\mu_sds}{\rm exp}(Y_t(T)),\nonumber\\ 
	Y_t(T)&=&Y_0+\int_{0}^{t}\psi(T-s)\sigma(X_s)dW_s-\frac{1}{2}\int_{0}^{t}\psi(T-s)a(X_s)\psi(T-s)^\top ds, \label{Y} \\
	Y_0(T) &=& \int_{0}^{T}(-\eta X_0+\psi(s)b(X_0)+\frac{1}{2}\psi(s)a(X_0)\psi(s)^\top)ds, \nonumber
	\end{eqnarray}
and $\psi\in {\mathcal L}^2([0,T], \mathbb{C}^d)$ solves the Riccati-Volterra equation:
	\begin{align}\label{riccati}
		\psi=(-\eta+\psi B+\frac{1}{2}A(\psi))*K,
	\end{align}
with $A(\cdot)$ appearing in Definition \ref{def1}.
In addition, the $Y$ has an alternative expression:
\begin{equation}
Y_t(T) = -\eta\int_{0}^{T}\mathbb{E}[X_s|\mathcal{F}_t]ds+ \frac{1}{2}\int_{t}^{T}\psi(T-s)a(\mathbb{E}[X_s|\mathcal{F}_t])\psi(T-s)^\top ds, \label{Y2}
\end{equation}
where	
\begin{small}
\begin{equation}\label{expectation}
	\mathbb{E}[X_T|\mathcal{F}_t]=\left({\rm id} - \int_{0}^{T}R_B(s)ds\right)X_0 +\int_{0}^{T}E_B(T-s)b^0(s)ds+\int_{0}^{t}E_B(T-s)\sigma(X_s)dW_s
\end{equation}
\end{small}
with ${\rm id}$ being the identity matrix, $R_B$ the resolvent of $-KB$, and $E_B = K-R_B*K$. 
\end{theorem}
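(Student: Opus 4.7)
The plan is to split the identity \eqref{eq} into the contributions of $c_0$ and $c_1$ separately and then establish the exponential-affine formula for $g(t,T)$ via the affine Volterra machinery of \cite{Jaber}. Because $m$ is deterministic, writing $\int_t^T \mu_s ds = \int_t^T m(s)\,ds + \eta \int_t^T X_s\,ds$ reduces the $c_0$ part to identifying the conditional Laplace functional of $\int_t^T X_s\,ds$. The $c_1\mu_T$ contribution is then handled at the end by differentiation under the expectation: since $\partial_T\, e^{-\int_t^T \mu_s ds} = -\mu_T\, e^{-\int_t^T \mu_s ds}$, a dominated-convergence argument yields $\mathbb{E}[\mu_T e^{-\int_t^T\mu_s ds}\mid\mathcal{F}_t^X] = -\partial_T g(t,T)$, and linearity finishes \eqref{eq}.

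The heart of the proof is showing that $N_t := \exp(Y_t(T))$ is a $\mathbb{P}$-martingale with terminal value $N_T = \exp(-\eta\int_0^T X_s\,ds)$ almost surely. The expression \eqref{Y} is an It\^o stochastic integral minus one half of its quadratic variation plus the deterministic shift $Y_0(T)$, so $N_t/N_0$ is a Dol\'eans-Dade exponential and therefore a local martingale. The terminal identification is most cleanly carried out via the alternative form \eqref{Y2}: at $t = T$ one has $\mathbb{E}[X_s\mid\mathcal{F}_T] = X_s$ for $s \le T$, and the second integral vanishes because its range collapses to $[T,T]$, leaving $Y_T(T) = -\eta \int_0^T X_s\,ds$ as needed. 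Once the local martingale is upgraded to a true martingale, the martingale identity gives $N_t = \mathbb{E}[\exp(-\eta\int_0^T X_s\,ds)\mid\mathcal{F}_t]$, and rearranging the $\mathcal{F}_t$-measurable factors $e^{-\int_0^t m(s)ds}$ and $e^{\eta \int_0^t X_s ds}$ delivers $g(t,T) = \mathbb{E}[e^{-\int_t^T \mu_s ds}\mid\mathcal{F}_t^X]$.

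For the subsidiary formulas \eqref{Y2} and \eqref{expectation}, the argument is essentially linear. Equation \eqref{expectation} follows from taking conditional expectations in \eqref{volterra process} and solving the resulting linear Volterra integral equation by variation of constants with resolvent $R_B$ of $-KB$, which is standard for affine linear Volterra equations. The equivalence of \eqref{Y} and \eqref{Y2} is then verified by computing $d_t Y_t(T)$ using the semimartingale dynamics of the forecast $s \mapsto \mathbb{E}[X_s\mid\mathcal{F}_t]$ implied by \eqref{expectation}, and invoking the Riccati-Volterra equation \eqref{riccati} to collapse the resulting expression into the stochastic-integral form of \eqref{Y}; matching initial values at $t = 0$, where the conditional means reduce to unconditional means, closes the identification.

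The main obstacle is the drift-killing calculation that ties $N_t$ to the Riccati-Volterra equation \eqref{riccati}. Because $X$ is neither Markovian nor a semimartingale, a naive It\^o application to $\exp(Y_t(T))$ is delicate: the time-shifted kernel $\psi(T-s)$ in \eqref{Y} couples past noise with the terminal horizon $T$, and the standard Markovian closure argument for affine processes does not apply directly. The workaround, following \cite{Jaber}, is to recast the computation around the forecast process $\xi^s_t := \mathbb{E}[X_s\mid\mathcal{F}_t]$ (for $s\ge t$), which is a true $\mathbb{F}$-semimartingale with explicit dynamics given by \eqref{expectation}. Affine-ness then permits $d\langle \xi^s_\cdot\rangle_t$ to be written in closed form, and the drift of $N_t\cdot e^{-\eta\int_0^t X_s ds}$ vanishes precisely when $\psi$ satisfies $\psi = (-\eta + \psi B + \tfrac12 A(\psi))*K$. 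A secondary but non-trivial point is upgrading the resulting local martingale to a true martingale and justifying the dominated-convergence step used for the $c_1\mu_T$ piece; both rely on the $L^2_{\mathrm{loc}}$-kernel assumption, the continuity of $b$ and $\sigma$, and the boundedness of $\psi$ on $[0,T]$ inherited from the solvability of \eqref{riccati}.
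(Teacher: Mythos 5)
Your proposal is correct and follows essentially the same route as the paper: the paper reduces the $c_0$ term to the conditional Laplace functional of $\int_t^T \eta X_s\,ds$, invokes Lemma 4.2 and Theorem 4.3 of Abi Jaber et al.\ (2019) to obtain $\exp(Y_t(T))$ as the relevant martingale (which is exactly the Dol\'eans-Dade/forecast-process argument you sketch), and then obtains the $c_1\mu_T$ term by differentiating $g(t,T)$ in $T$. The only difference is that you unpack the cited affine-Volterra machinery rather than quoting it, and you are slightly more careful about the interchange of $\partial_T$ and the expectation.
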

\begin{proof}
	See Appendix \ref{appendix:affine}. 
\end{proof}

\begin{remark}\label{remark1}
The partial derivative $\frac{\partial g(t, T)}{\partial T}$ does not admit a closed-form solution in general because the function $g(t,T)$ depends on $Y_t(T)$ which depends on $T$ through the $\psi$ solved from the Riccati-Volterra Equation \eqref{riccati}. Fortunately, the partial derivative appears in insurance products related to the death benefit through an integration. We can then avoid computing it by means of integration by parts.
\end{remark}

We highlight that the expression in \eqref{Y} implies that $Y_t(T)$ is a semimartingale, because all of the integrants in \eqref{Y} are independent of $t$. This is important and interesting because it implies that insurance product prices can be expressed into SDE even though the mortality rate with LRD can not. This enables us to construct a hedging strategy for longevity risk using longevity securities in a LRD mortality environment, indicating the importance of the longevity securatization. For the time being, we apply Theorem \ref{theorem1} to obtain the survival probability of the Volterra mortality model in a closed-form solution.

\begin{cor}\label{cor1}
 (Survival Probability) Under the Volterra mortality model in \eqref{mu0}, \eqref{volterra process}, and Definition \ref{def1}, for any $t<T$, the survival probability reads
\begin{align}\label{survival}
\mathbb{P}(\tau>T|\mathcal{F}_t)=\mathbb{E}\left[\left.e^{-\int_{t}^{T} \mu_sds}\right|\mathcal{F}_t\right]=g(t,T) = e^{-\int_{0}^{T}m(s)ds+\int_0^t\mu_sds}\exp(Y_t(T)),
\end{align}
where  $Y_t(T)$ is defined in \eqref{Y} or, equivalently, \eqref{Y2}.
\end{cor}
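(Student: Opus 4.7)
The plan is to obtain Corollary \ref{cor1} as an immediate specialization of Theorem \ref{theorem1}, so the proof should really be a short reduction rather than a new calculation. First I would invoke the doubly stochastic identity already recorded in \eqref{Conditional S}, namely
\[
\mathbb{P}(\tau>T\mid \mathcal{F}_t)=\mathbb{E}\!\left[\left.e^{-\int_{t}^{T}\mu_sds}\right|\mathcal{F}_t\right],
\]
which is a consequence of the law of iterated expectations once one conditions on the entire trajectory of the intensity $\mu$. This step is taken for granted because it has been derived in the model section and is not specific to the Volterra structure.

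Next, I would specialize Theorem \ref{theorem1} by setting $c_0=1$ and $c_1=0$. Plugging these into \eqref{eq} collapses $c_0g(t,T)-c_1\partial_T g(t,T)$ to $g(t,T)$, giving
\[
\mathbb{E}\!\left[\left.e^{-\int_{t}^{T}\mu_sds}\right|\mathcal{F}_t^X\right]=g(t,T)=e^{-\int_{0}^{T}m(s)ds+\int_{0}^{t}\mu_sds}\exp(Y_t(T)),
\]
with $Y_t(T)$ given by either \eqref{Y} or \eqref{Y2}. Since Theorem \ref{theorem1} delivers the formula conditional on $\mathcal{F}_t^X$ while \eqref{Conditional S} conditions on $\mathcal{F}_t$, this will be the only point that is not entirely routine.

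To close the gap between the two conditioning $\sigma$-algebras, I would argue in the standard doubly stochastic way. Recall from the model setup that $\mathcal{F}_t=\mathcal{G}_t\vee\mathcal{H}_t$, that $\mu$ (hence $X$) is $\mathcal{H}$-adapted, and that, conditional on $\mathcal{H}_\infty$, the counting process $N$ is an inhomogeneous Poisson process independent of the extra randomness in $\mathcal{G}$. Consequently, on the event $\{\tau>t\}$ the integrand $e^{-\int_t^T\mu_sds}$ depends only on the $X$-trajectory and its conditional law given $\mathcal{F}_t$ coincides with its conditional law given $\mathcal{F}_t^X$. This identification lets me substitute the Theorem~\ref{theorem1} output directly into \eqref{Conditional S} and obtain \eqref{survival}.

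The main obstacle is therefore not computational at all; the hard analytic work, namely the derivation of $g(t,T)$, the Riccati--Volterra equation \eqref{riccati}, and the semimartingale form \eqref{Y}, has already been done inside Theorem \ref{theorem1}. The only care required is the filtration reconciliation sketched above, which is essentially an application of the tower property combined with the independence structure embedded in the Cox construction of $\tau$.
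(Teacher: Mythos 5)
Your proposal is correct and follows essentially the same route as the paper, whose proof is simply the one-line specialization $c_0=1$, $c_1=0$ in Theorem \ref{theorem1}. The extra care you take in reconciling the conditioning $\sigma$-algebras $\mathcal{F}_t$ and $\mathcal{F}_t^X$ via the Cox construction is a reasonable refinement that the paper leaves implicit, but it does not change the substance of the argument.
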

\begin{proof}
The result follows by taking $c_0 = 1$ and $c_1 = 0$ in Theorem \ref{theorem1}.
\end{proof}

The survival probability in Corollary \ref{cor1} captures LRD because it depends on the whole historical path of the mortality rate. This is reflected in the terms $e^{-\int_{0}^{T}m(s)ds+\int_0^t\mu_sds}$
and $Y_0(T)$. However, when comparing our survival probability with LRD with that of the corresponding Markovian mortality model, we find them consistent. Consider the case of fractional kernel $K(t)=\frac{t^{\alpha-1}}{\Gamma(\alpha)}{\rm id}$, where $\alpha = H + 1/2$ and $H$ is the Hurst parameter $H$. The process $X_t$ becomes
\begin{equation}\label{fractional X}
X_t = X_0 +\lambda\int_0^t\frac{(t-s)^{\alpha-1}}{\Gamma(\alpha)}(\theta-X_s)ds +\int_0^t \frac{(t-s)^{\alpha-1}}{\Gamma(\alpha)}\sigma(X_s) dW_s.
\end{equation}
When $\alpha=1$, the $K(t)\equiv {\rm id}$ and
\[dX_t=\lambda(\theta-X_t)dt + \sigma(X_t) dW_t,\]
which is the Vasicek mortality rate model for a constant $\sigma(X_t)$ and the CIR model for $\sigma(X_t) = \sigma\sqrt{X_t}$. Both are investigated by \cite{Biffis}. In such a situation, a part of the $Y_0(T)$ in \eqref{Y} cancels with $\int_0^t\mu_sds$, and the Volterra-Riccati Equation \eqref{riccati} reduces to the ordinary Riccati equation. This makes our solution the same as these in \cite{Biffis} for $\alpha = 1$ or $H = 1/2$. However, once $\alpha>1$, the process $X_t$ has the LRD feature.  The empirical study in \cite{YPC} shows that the survival probability is underestimated when LRD is not taken into account. 

\subsection{Standard Insurance contracts}
To streamline the presentation, we assume that mortality rates are independent of the interest rate. Although this assumption could be considered as mathematically restrictive, it is a common assumption in the actuarial and insurance literature. Two basic payoffs in insurance contracts are the survival benefit and the death benefit.

Let $C_T$ be a bounded random payoff for a survivor at time $T$ independent of the mortality. The time-$t$ fair value of the survival benefit ${\rm SB}_t(C_T; T)$ of the terminal amount $C_T$, with $0\leq t \leq T$  under the pricing measure $\mathbb{Q}$ is given by
\begin{equation}
{\rm SB}_t(C_T; T)= 1_{\{\tau>t\}}\mathbb{E}^{\mathbb{Q}}\left[\left.e^{-\int_{t}^{T}r_sds}C_T\right|\mathcal{G}^Z_t\right]\mathbb{E}^{\mathbb{Q}}\left[\left.e^{-\int_{t}^{T}\mu_sds}\right|\mathcal{G}_t^X\right]. \label{SBpayoff}
\end{equation}
To draw some insights from \eqref{SBpayoff}, let us consider the situation in which the mortality model of \eqref{mu0} and \eqref{volterra process} and interest rate process of \eqref{Z} are constructed under the pricing measure $\mathbb{Q}$ or, equivalently, that $\mathbb{P} = \mathbb{Q}$ in Section \ref{Model}. We refer to the results obtained under such an assumption as the baseline case in this paper and the corresponding valuation becomes simple.

\begin{pro}\label{pro1}
 (Survival Benefit: The Baseline Valuation.) If $\mathbb{P} = \mathbb{Q}$ and the mortality and interest rate are independent, then the Volterra mortality model of \eqref{mu0}, \eqref{volterra process}, and Definition \ref{def1} and the affine interest rate model imply that
\[{\rm SB}_t(C_T; T)= 1_{\{\tau>t\}}\mathcal{B}(t,T)\mathbb{E}^{\mathbb{Q}^T}\left[\left.C_T\right|\mathcal{G}^Z_t\right]g(t,T),\]
where  $g(t,T)$ is presented in Theorem \ref{theorem1}, $\mathcal{B}(t,T)$ is the zero coupon bond price in \eqref{affineMarkov}, and $\mathbb{Q}^T$ is the forward pricing measure:
$$\left.\frac{d\mathbb{Q}^T}{d\mathbb{Q}}\right|_{\mathcal{F}_t} = \exp\left(-\frac{1}{2}\int_0^t \widetilde{\beta}^2(u, T)\widetilde{\sigma}^2(Z_u)\,du - \int_0^t \widetilde{\beta}(u, T)\widetilde{\sigma}(Z_u)dW'_u \right).$$
\end{pro}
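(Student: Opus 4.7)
The plan is to start from the pricing expression \eqref{SBpayoff} and handle its two conditional expectations separately, exploiting the mortality--interest-rate independence that was already used to produce the factorized form. The mortality-side factor $\mathbb{E}^{\mathbb{Q}}[e^{-\int_t^T \mu_s ds}|\mathcal{G}_t^X]$ is identified immediately as $g(t,T)$ via Corollary \ref{cor1}, since under the baseline assumption $\mathbb{P}=\mathbb{Q}$ and Theorem \ref{theorem1} applies verbatim with $c_0=1$, $c_1=0$.

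For the interest-rate-side factor $\mathbb{E}^{\mathbb{Q}}[e^{-\int_t^T r_s ds}C_T|\mathcal{G}_t^Z]$, I would perform a numeraire change from the money-market account to the $T$-maturity zero-coupon bond. Concretely, I would apply It\^o's formula to the affine bond price $\mathcal{B}(t,T)=e^{\widetilde{\alpha}(t,T)+\widetilde{\beta}(t,T)\cdot Z_t}$ from \eqref{affineMarkov}, use the fact that the discounted bond price $e^{-\int_0^t r_s ds}\mathcal{B}(t,T)$ must be a $\mathbb{Q}$-martingale because $\mathcal{B}(\cdot,T)$ is a tradable price, and read off the Dol\'eans--Dade exponential form whose terminal value defines $d\mathbb{Q}^T/d\mathbb{Q}$ as stated in the proposition. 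The standard Bayes rule for conditional expectations under a measure change then yields
\begin{equation*}
\mathbb{E}^{\mathbb{Q}}\!\left[\left.e^{-\int_t^T r_s ds}C_T\right|\mathcal{G}_t^Z\right] = \mathcal{B}(t,T)\,\mathbb{E}^{\mathbb{Q}^T}\!\left[C_T\big|\mathcal{G}_t^Z\right],
\end{equation*}
and multiplying by $1_{\{\tau>t\}}g(t,T)$ produces the claimed formula. Restricting the conditioning from $\mathcal{F}_t$ to $\mathcal{G}_t^Z$ on the interest-rate side is legitimate because, under the baseline independence, the filtration generated by $Z$ is independent of the mortality drivers, so the mortality information is irrelevant for any $Z$-measurable payoff.

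The main technical obstacle is verifying that the Radon--Nikodym density is a genuine $\mathbb{Q}$-martingale rather than merely a local martingale, which is needed for the numeraire change to be well-defined. This requires a Novikov- or Kazamaki-type integrability condition on $\widetilde{\beta}(u,T)\widetilde{\sigma}(Z_u)$; for the standard affine short-rate specifications (e.g.\ Vasicek or CIR) it is classical and can be cited from \cite{DFS} and \cite{F}. Apart from this integrability check, the rest of the argument is a direct assembly of Corollary \ref{cor1} with the textbook forward-measure construction under an affine short-rate model.
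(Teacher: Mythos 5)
Your proposal is correct and follows essentially the same route as the paper: identify the mortality factor as $g(t,T)$ via Corollary \ref{cor1}, then derive the bond-price SDE from \eqref{affineMarkov}, express $e^{-\int_t^T r_s ds}$ as $\mathcal{B}(t,T)$ times a stochastic exponential, and pass to the forward measure via Girsanov. Your added remark on verifying a Novikov-type condition for the density is a point of rigor the paper leaves implicit, but it does not change the argument.
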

\begin{proof}
By Corollary \ref{cor1},
$$\mathbb{E}^{\mathbb{Q}}\left[\left.e^{-\int_{t}^{T}\mu_sds}\right|\mathcal{G}_t^X\right] = g(t,T).$$
By the affine short-rate Model \eqref{Z} and Equation \eqref{affineMarkov}, we have
\[d\mathcal{B}(t,T) = \mathcal{B}(t,T)r_tdt - \mathcal{B}(t,T)\widetilde{\beta}(t,T)\widetilde{\sigma}(Z_t)dW'_t,\]
which implies that $1= \mathcal{B}(T,T) =  \mathcal{B}(t,T)e^{\int_t^T r_u-\frac{1}{2} \widetilde{\beta}^2(u, T)\widetilde{\sigma}^2(Z_u)\,du - \int_t^T \widetilde{\beta}(u, T)\widetilde{\sigma}(Z_u)dW'_u }$. Hence,
$$e^{-\int_{t}^{T}r_sds} = \mathcal{B}(t,T)\exp\left(-\frac{1}{2}\int_t^T \widetilde{\beta}^2(u, T)\widetilde{\sigma}^2(Z_u)\,du - \int_t^T \widetilde{\beta}(u, T)\widetilde{\sigma}(Z_u)dW'_u \right).$$
An application of the Girsanov theorem shows that
$$\mathbb{E}^{\mathbb{Q}}\left[\left.e^{-\int_{t}^{T}r_sds}C_T\right|\mathcal{G}^Z_t\right] = \mathcal{B}(t,T)\mathbb{E}^{\mathbb{Q}^T}\left[\left.C_T\right|\mathcal{G}^Z_t\right],$$
where the forward measure $\mathbb{Q}^T$ is presented in the Proposition.
\end{proof}

Another important basic payoff is the death benefit. Let $C_t$ be a bounded $\mathbb{G}^Z$-predictable process, representing a cash flow stream independent of the mortality rate. Then, the time-$t$ fair value of the death benefit with a cash flow stream $C_{t}$, payable in case the insured dies before time $T$ and $0\leq t \leq T$, is given by
\[{\rm DB}_t(C_{\tau}; T)=1_{\{\tau >t\}}\int_{t}^{T}\mathbb{E}^{\mathbb{Q}}\left[\left.e^{-\int_{t}^{u}r_sds}C_u\right|\mathcal{G}_t^Z\right]\mathbb{E}^{\mathbb{Q}}\left[\left.e^{-\int_{t}^{u}\mu_sds}\mu_u\right|\mathcal{G}_t^X\right]du.\]
Then, we also have an explicit baseline valuation formula for the death benefit.
\begin{pro}\label{pro2}
 (Death Benefit: The Baseline Valuation.) If $\mathbb{P} = \mathbb{Q}$ and the mortality and interest rate are independent, then the Volterra mortality model of \eqref{mu0}, \eqref{volterra process}, and Definition \ref{def1} and the affine interest rate model imply that
\begin{align}
{\rm DB}_t(C_T; T) = -1_{\{\tau>t\}}\int_{t}^{T}\mathcal{B}(t,u)\mathbb{E}^{\mathbb{Q}^u}\left[\left.C_u\right|\mathcal{G}^Z_t\right]\frac{\partial g(t, u)}{\partial u}du, \notag
\end{align}
where  $Y_t(u)$ is defined in \eqref{Y}, $\mathcal{B}(t,T)$ in \eqref{affineMarkov}, $g(t,T)$ in Theorem \ref{theorem1}, and the forward pricing measure $\mathbb{Q}^u$ in Proposition \ref{pro1}.
\end{pro}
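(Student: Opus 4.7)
The plan is to decompose the death benefit into an interest-rate factor and a mortality factor, recognize each via earlier results, and then assemble. The expression
\[{\rm DB}_t(C_\tau; T) = 1_{\{\tau>t\}}\int_t^T \underbrace{\mathbb{E}^{\mathbb{Q}}\!\left[\left. e^{-\int_t^u r_s\,ds}\,C_u \right|\mathcal{G}^Z_t\right]}_{(\mathrm{I})_u} \cdot \underbrace{\mathbb{E}^{\mathbb{Q}}\!\left[\left. e^{-\int_t^u \mu_s\,ds}\,\mu_u\right|\mathcal{G}^X_t\right]}_{(\mathrm{II})_u}\,du\]
already uses the independence of $Z$ and $X$ to factor the inner expectations; I would simply start from this representation.

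For factor $(\mathrm{I})_u$, I would copy verbatim the Girsanov argument appearing in the proof of Proposition \ref{pro1}, but with the horizon $T$ replaced by the running variable $u$. Specifically, using the zero-coupon bond dynamics implied by \eqref{affineMarkov}, one obtains
\[e^{-\int_t^u r_s\,ds} = \mathcal{B}(t,u)\exp\!\left(-\tfrac{1}{2}\int_t^u \widetilde{\beta}^2(v,u)\widetilde{\sigma}^2(Z_v)\,dv - \int_t^u \widetilde{\beta}(v,u)\widetilde{\sigma}(Z_v)\,dW'_v\right),\]
so the $u$-forward measure $\mathbb{Q}^u$ (defined exactly as in Proposition \ref{pro1} with $T\mapsto u$) yields $(\mathrm{I})_u = \mathcal{B}(t,u)\,\mathbb{E}^{\mathbb{Q}^u}[C_u\mid \mathcal{G}^Z_t]$.

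The main work is in identifying $(\mathrm{II})_u$. Here I would invoke Theorem \ref{theorem1} with the choice $c_0 = 0$, $c_1 = 1$ and with $T$ replaced by $u$; this immediately gives
\[\mathbb{E}\!\left[\left. e^{-\int_t^u \mu_s\,ds}\,\mu_u\right|\mathcal{F}_t^X\right] = -\frac{\partial g(t,u)}{\partial u}.\]
Since we may replace $\mathcal{F}_t^X$ by $\mathcal{G}_t^X$ under $\mathbb{P}=\mathbb{Q}$ thanks to the independence assumption, this is exactly $(\mathrm{II})_u$.

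Substituting the two expressions back into the integrand and pulling the indicator $1_{\{\tau>t\}}$ outside gives the claimed formula. The only subtle step is the appeal to Theorem \ref{theorem1} with $c_1\neq 0$: as noted in Remark \ref{remark1}, $\partial g/\partial u$ need not admit a closed form because $\psi$ depends on the horizon through the Riccati–Volterra equation \eqref{riccati}, but here it appears inside a $du$-integral, so no explicit evaluation is required at this stage. Thus no additional technical estimate beyond the statement of Theorem \ref{theorem1} and the forward-measure computation of Proposition \ref{pro1} is needed to conclude.
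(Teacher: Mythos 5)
Your proposal is correct and follows essentially the same route as the paper: the paper likewise reuses the forward-measure (Girsanov) argument of Proposition \ref{pro1} for the interest-rate factor with $T$ replaced by $u$, and identifies the mortality factor as $-\partial g(t,u)/\partial u$ via Theorem \ref{theorem1}. No gaps to report.
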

\begin{proof}
The proof is similar to that of Proposition \ref{pro1} except for the second expectation appearing in the representation of ${\rm DB}_t(C_{\tau}; T)$. By Theorem \ref{theorem1}, it is clear that
\[\mathbb{E}^{\mathbb{Q}}\left[\left.e^{-\int_{t}^{u}\mu_sds}\mu_u\right|\mathcal{G}_t^X\right] = -\frac{\partial g(t, u)}{\partial u}.\]
\end{proof}
Applying integration by parts to DB in Proposition \ref{pro2} yields an alternative expression:
\begin{align}\label{DB2}
{\rm DB}_t(C_T; T) 	&=-1_{\{\tau>t\}}\bigg\{\mathcal{B}(t,T)\mathbb{E}^{\mathbb{Q}^T}\left[\left.C_T\right|\mathcal{G}^Z_t\right]g(t, T)-\mathbb{E}^{\mathbb{Q}^t}\left[\left.C_t\right|\mathcal{G}^Z_t\right] \\
&-\int_{t}^{T}\frac{\partial\left(\mathcal{B}(t,u)\mathbb{E}^{\mathbb{Q}^u}\left[\left.C_u\right|\mathcal{G}^Z_t\right]\right)}{\partial u}g(t, u)du\bigg\}.\notag
\end{align}
In this way, as the interest rate model follows the Markovian affine model, the partial derivative term in \eqref{DB2} admits a closed-form solution in many cases and we get rid of the need to compute a $T$-partial derivative of $g(t,T)$, which is rather more complicated.
\subsubsection{Examples of concrete insurance contracts}
These formulas for survival and death benefits may still be considered abstract, so we apply them to some concrete insurance or pension products. 

{\bf Longevity Bond}: Consider a unit zero-coupon longevity bond which pays \$1 times $e^{-\int_{t}^{T}\mu_sds}$, the percentage of survivors in a population during $t$ to $T$. \cite{BCD} show that the longevity bond takes the form
$$\mathcal{B}_L(t,T) = \mathbb{E}^{\mathbb{Q}}\left[\left.e^{-\int_{t}^{T}r_s+\mu_sds}\right|\mathcal{F}_t\right].$$
Under the Volterra mortality model with LRD, Proposition \ref{pro1} immediately implies that
\[\mathcal{B}_L(t,T) = \mathcal{B}(t,T)g(t,T),\]
by setting $C_T \equiv 1$ once the financial market is independent of human mortality.

{\bf Annuity}: Consider a $t'$-years deferred annuity involving a continuous payment of an indexed benefit from time t onwards, conditional on survival of the policyholder at that time. Suppose that the payoff is made of a unit amount each year. Denote  $x^*$ as the maximum age humans can live. The fair value of such an annuity is given by
\begin{align}\label{annuity}
\begin{split}
{\rm AN}_t(t') &= \sum_{h=t'}^{x^*-t-1}{\rm SB}_t(1;t+h)
=\sum_{T=t+t'}^{x^*-1} \mathcal{B}(t,T)g(t,T)\\
&=\sum_{T=t+t'}^{x^*-1}e^{\widetilde{\alpha}(t,T)+\widetilde{\beta}(t,T)Z_t}e^{-\int_{0}^{T}m(s)ds+\int_0^t\mu_sds}\exp(Y_t(T)),
\end{split}
\end{align}
where  $Y_t(T)$ is defined in \eqref{Y} and $\widetilde{\alpha}(t,T)$ and $\widetilde{\beta}(t,T)$ are as in \eqref{affineMarkov}.

{\bf Assurances:} Consider an assurance guaranteeing a unit amount benefit in case of death in the period $(t,T]$.  By setting $C\equiv 1$ in \eqref{DB2}, the fair value of such an assurance is given by
\[{\rm AS}_t(T) = 1-\mathcal{B}(t,T)g(t, T)
+ \int_{t}^{T}\frac{\partial\mathcal{B}(t,u)}{\partial u}g(t, u)du, \]
where $\mathcal{B}(t,T)$ is defined in \eqref{affineMarkov} and $g(t,T)$ in Theorem \ref{theorem1}.

{\bf Endowment:} Consider an endowment given the survival on time $t$ with maturity time $T$, which  includes a survival benefit $C_1$ given the survival on time $T$ and a death benefit $C_2$ in case of the death in the period $(t,T]$. $C_1$ and $C_2$ are constants. By  Propositions \ref{pro1} and \ref{pro2} and \eqref{DB2}, the  fair value of such an endowment is given by
\begin{align}
{\rm EN}_t^T(C_1,C_2) &= {\rm SB}_t(C_1; T) + {\rm DB}_t(C_2; T)\notag\\
& = (C_1-C_2)\mathcal{B}(t,T)g(t,T)+C_2\left(1
+ \int_{t}^{T}\frac{\partial\mathcal{B}(t,u)}{\partial u}g(t, u)du\right), \notag
\end{align}
where $\mathcal{B}(t,T)$ is defined in \eqref{affineMarkov} and $g(t,T)$ in Theorem \ref{theorem1}.

\subsection{Esscher transform}\label{SectionEsscher}
Although Propositions \ref{pro1} and \ref{pro2} facilitate the model development under the pricing measure and the calibration to market prices of insurance products, an insurance practice may not have sufficient market prices for such calibration. In addition, risk management requires the connection between the physical and pricing measures as demonstrated in the next section. Therefore, we present two possible ways to link the measures of $\mathbb{P}$ and $\mathbb{Q}$ with limited observed prices. For the time being, we focus on the situation in which the Volterra mortality model is estimated using a historical mortality table and hence built under the physical measure $\mathbb{P} \neq \mathbb{Q}$. 

The first approach commonly used to identify a pricing measure in the actuarial literature is the Esscher transform. \cite{CB} apply the Esscher transform to the mortality rate to find a related martingale measure for pricing longevity derivatives.  \cite{WZJH} also use the Esscher transform for pricing longevity derivatives based on an improved Lee--Carter model.  Although the mortality rate $\mu_t$ is non-Markovian and non-semimartingale under our framework,  the advantage is that we have an explicit Laplace-Fourier functional representation in Theorem \ref{theorem1}. For a random variable $\gamma$ with a well-defined moment-generating function (MGF) under $\mathbb{P}$, an equivalent probability measure $\mathbb{Q}(\theta)$ derived from the Esscher transform with parameter $\theta$ is defined as
\begin{eqnarray}
\frac{d\mathbb{Q}(\theta)}{d \mathbb{P}} = \frac{e^{\theta\gamma}}{\mathbb{E}[e^{\theta\gamma}]}.
\end{eqnarray}

By setting $c_0=1$ and $c_1 = 0$ in Theorem \ref{theorem1}, the MGF for the random variable $-\int_t^T\mu_s\,ds$ is well-defined and can be obtained in an explicit form. Specifically, as we assume $\mu_t = m(t) + \eta X_t$, the MGF defined as 
$$M(\theta_T)= \mathbb{E}[e^{-\theta_T\int_t^T\mu_s\,ds}],$$
which corresponds to the $g(t,T)$ in Theorem \ref{theorem1} with the parameters $m(t)$ and $\eta$ replaced with $\theta_Tm(t)$ and $\theta_T\eta$ for the constant $\theta_T$ and a fixed $T$. For instance, we observe a risk-free zero coupon bond and a zero coupon longevity bond with the same maturity. Then, we can deduce the synthetic value of 
\begin{equation}
\mathbb{E}^{\mathbb{Q}(\theta_T)}_t[e^{-\int_t^T\mu_s\,ds}] = \frac{\mathbb{E}_t[e^{-(\theta_T+1)\int_t^T\mu_s\,ds}]}{\mathbb{E}_t[e^{-\theta_T\int_t^T\mu_s\,ds}]}=\frac{M(\theta_T + 1)}{M(\theta_T)}. \label{Esscher}
\end{equation}
 Although the left-hand quantity is deduced from market prices, the $M(\theta_T)$ achieves a closed-form solution from our model through Theorem \ref{theorem1}. Specifically, $M(\theta)$ is the $g(t,T)$ in Theorem \ref{theorem1} with $m(t)$ and $\eta$ replaced with $\theta m(t)$ and $\theta\eta$, respectively. One can then calibrate $\theta_T$ to the term structure of longevity bonds, or longevity bond prices for different maturity $T$, after estimating the physical model parameters, including the LRD feature, using historical data. 

From \eqref{Esscher}, when $\theta_T = 0$, the longevity bond is priced under $\mathbb{P}$ and our previous valuation formulas hold. For a nonzero $\theta_T$, a slight adjustment can be made through \eqref{Esscher} as the MGF is explicitly known.

\subsection{Affine retaining transform}\label{retaining}
Although the Esscher transform provides us with a powerful and convenient framework to identify a pricing measure, it does not offer us an explicit stochastic process under the pricing measure. When we perform a risk management strategy, we need the stochastic process of the mortality rate under both $\mathbb{P}$ and $\mathbb{Q}$. It is desirable that the Volterra mortality model retains the affine nature in Definition \ref{def1}. Therefore, we propose the following affine retaining transform based on the Girsanov theorem.

\begin{definition}\label{def2}
Given an affine SIVE of \eqref{volterra process} satisfying Definition \ref{def1}, an affine retaining transform for measure change is based on shifting the Wiener process as follows:
$$dW^\mathbb{Q}_t = dW_t - \sigma(X_t)^\top\varphi(t)\,dt,$$
for a deterministic function $\varphi(t)\in \mathbb{R}^d$ satisfying 
\begin{equation}\label{condition}
\mathbb{E}_t\left[e^{\frac{1}{2}\int_0^T|\sigma(X_t)^\top\varphi(t)|^2dt}\right]< \infty.
\end{equation}
\end{definition}

Under Definition \ref{def2}, we identify a pricing measure $\mathbb{Q}$ equivalent to $\mathbb{P}$:
$$\frac{d\mathbb{Q}}{d \mathbb{P}} = e^{-\frac{1}{2}\int_0^t|\sigma(X_s)^\top\varphi(s)|^2ds+ \int_0^t\varphi(s)^\top\sigma(X_s)dW_s},$$
where $\varphi(t)$ is calibrated to observed prices. In addition, the mortality process $\mu_t = m(t) +\eta X_t$ in \eqref{volterra process} under $\mathbb{Q}$ has the $X_t$ changed to
\begin{eqnarray}\label{XunderQ}
  X_t = X_0 +\int_0^tK(t-s)(b(X_s)+a(X_s)\varphi(s))ds +\int_0^t K(t-s)\sigma(X_s)dW_s^\mathbb{Q},
\end{eqnarray}
where $b(X_s)+a(X_s)\varphi(s)$ and $a(X_s)$ still satisfy the affine nature in Definition \ref{def1}. Hence, the pricing formulas of Propositions \ref{pro1} and \ref{pro2} remain the same except that the $b(X_s)$ is replaced with $b(X_s)+a(X_s)\varphi(s)$ once the affine retaining transform in Definition \ref{def2} is adopted. 

\begin{remark} Although the Esscher and affine retaining transforms presented in Sections \ref{SectionEsscher} and \ref{retaining} are applied to the Volterra mortality model, these techniques have been widely used in the actuarial science literature, including the measure change with the affine interest rate models. Therefore, we do not repeat the detailed case for the interest rate. We mention them to highlight the advantage of the proposed LRD mortality model in sense of calibrating to the pricing measure.
\end{remark}

\section{Optimal hedging of longevity risk}\label{Hedge}
We further investigate optimal hedging with the proposed LRD mortality model, as hedging is a typical risk management task. The intent is to demonstrate the tractability of the LRD mortality model in hedging problems. As hedging should be performed under the physical probability measure $\mathbb{P}$, whereas longevity securities such as the longevity bonds and swaps are valued in the market-implied pricing measure $\mathbb{Q}$, we adopt the affine retaining transform detailed in Section \ref{retaining} to bridge the two probability measures in this section.

Let us sketch the conceptual framework prior to detailing the mathematics. As insurance product prices under the Volterra mortality model are semimartingales and hence can be expressed in SDE, the insurer's wealth also satisfies a SDE with stochastic coefficients, which are possibly non-Markovian. According to stochastic control theory, the insurer's wealth plays the role of the state process. Therefore, the theory of backward SDE (BSDE) is useful for solving the stochastic optimal control problem for a state process with stochastic coefficients. Typically, the mean-variance (MV) hedging problem is closely related to the linear-quadratic (LQ) control problem under the classic formulation of the BSDE approach.  In the following, we leverage this well-received theoretical result to show the application of the LRD mortality model, though the optimal hedging derived is novel and has remarkable performance in reducing risk with the LRD mortality. The performance is, however, shown in the next section numerically.

\subsection{Problem formulation}
Consider an insurer offering a pension scheme who wants to hedge the longevity risk using a longevity security. Specifically, the insurer allocates her capital among a bank account, risk-free zero-coupon bond, and zero-coupon longevity bond. Let us concentrate on the one-dimensional case so that $d=k=1$ from now on.

To simplify the discussion, we adopt the VV mortality rate and assume $m(t)=0$ and $\eta=1$ in \eqref{mu0}. In other words, $\mu(t)= X(t)$ and
\begin{equation}\label{mortality}
\mu_t= X_t = X_0+\int_{0}^{t}K(t-s)(b^0-b^1X_s)ds+\int_{0}^{t}K(t-s)\sigma_{\mu}dW_{s},
\end{equation} 
where $b^0$, $b^1$, and $\sigma_{\mu}$ are constants and $K$ is the Volterra kernel.  In addition, the interest rate $r_t = Z_t$  follows the Vasicek model: 
\begin{equation}
dr(t)=(\widetilde{b}^0 - \widetilde{b}^1r_t)dt +\sigma_rdW'_{t},
\end{equation}
where $\widetilde{b}^0$, $\widetilde{b}^1$, and $\sigma_r$ are constant parameters.   $W_t$ and $W'_t$ are independent Wiener processes under $\mathbb{P}$. Let $\bm{W}(t) = (W_t, W'_t)^\top$. Using the affine retaining transform in Definition \ref{def2}, the Weiner process under the pricing measure is given by
 \[dW_t^{\mathbb{Q}} = dW_t - \sigma_{\mu}\frac{\varphi(t)}{\sigma_{\mu}}dt, ~ d{W'_t}^{\mathbb{Q}} = dW'_t - \sigma_r\frac{\vartheta(t)}{\sigma_r}dt,\]
 where $\vartheta$ and $\varphi$ are deterministic functions satisfying the condition \eqref{condition}.  Under the pricing measure, the mortality and interest rates are, respectively,
 \[X_t = X_0+\int_{0}^{t}K(t-s)(b^0+ \varphi(s)\sigma_{\mu}-b^1X_s)ds+\int_{0}^{t}K(t-s)\sigma_{\mu}dW_s^{\mathbb{Q}};\]
$$dr(t)=(\widetilde{b}^0 + \vartheta(t)\sigma_r- \widetilde{b}^1r_t)dt +\sigma_rd{W'_t}^{\mathbb{Q}}.$$  

As the unit zero coupon bond price takes the form
\[\mathcal{B}(t,T)=\mathbb{E}^{\mathbb{Q}}\left[\left.e^{-\int_{t}^{T}r(s)ds}\right|\mathcal{F}_t\right] = e^{\widetilde{\alpha}(t, T)+\widetilde{\beta}(t, T)r_t}, \]
with $\widetilde{\alpha}(t,T)$ and $\widetilde{\beta}(t,T)$  as defined in Appendix \ref{appendix:affine}, the $\mathbb{P}$-dynamics of the bond reads
\[d\mathcal{B}(t,T) = \mathcal{B}(t,T)(r(t)+\nu_{\mathcal{B}}(t))dt + \mathcal{B}(t,T)\sigma_b(t)dW'_t,\]
where $\nu_{\mathcal{B}} = \vartheta(t)\sigma_b(t)$ and $\sigma_b(t) = -\widetilde{\beta}(t,T)\sigma_r$.
Similarly, using the expression for a zero coupon longevity bond, i.e., 
\begin{align}
\mathcal{B}_L(t,T) =\mathbb{E}^{\mathbb{Q}}\left[\left.e^{-\int_{t}^{T}r(s)+\mu(s)ds}\right|\mathcal{F}_t\right] =\mathcal{B}(t,T)e^{\int_{0}^{t}\mu(s)ds}{\rm exp}(Y^1_t(T)),\notag
\end{align}
where $Y^1_t(T)$ is equivalent to the $Y_t(T)$ in \eqref{Y} with $b(x) = b^0+\varphi(s)\sigma_{\mu} - b^1x$,  $\sigma(x) = \sigma_{\mu}$, and $W$ replaced by $W^{\mathbb{Q}}$, we obtain the $\mathbb{P}$-dynamics of the longevity bond prices as follows:
\[d\mathcal{B}_L(t,T) = \mathcal{B}_L(t,T)(r(t)+\mu(t)+\nu_L(t))dt + \mathcal{B}_L(t,T)\sigma_l(t) dW_t +\mathcal{B}_L(t,T)\sigma_bdW'_t,\]
where $\nu_L = \nu_{\mathcal{B}} + \varphi(t)\sigma_l$, $\sigma_l = -\psi_1(T-t)\sigma_{\mu}$, and $\psi_1\in\mathcal{L}^2([0,T],\mathbb{C})$ is the solution of the Riccati equation $\psi_1 = (-1- b^1\psi_1)*K$.   As an investment amount of $\mathcal{B}_L(t,T)$ in the longevity bond at time $t$ becomes $e^{-\int_{t}^{\tau}\mu(s)ds}\mathcal{B}_L(\tau,T)$ at $\tau >t$,  the value of holding one unit of zero coupon longevity bond $\mathcal{B}_L(t)$ satisfies
\begin{equation}\label{longevity}
d\mathcal{B}_L(t,T) = \mathcal{B}_L(t,T)(r(t)+\nu_L(t))dt + \mathcal{B}_L(t,T)\sigma_l(t) dW_t +\mathcal{B}_L(t,T)\sigma_bdW'_t.
\end{equation}
The quantities $\nu_L - \nu_{\mathcal{B}}$ and $\nu_{\mathcal{B}}$  are often known as the market prices of mortality and interest rate risks, respectively. From \eqref{longevity}, the zero coupon longevity bond price still satisfies a SDE due to the semimartingale nature of $Y_t(T)$. This fact enables us to deal with the optimal hedging problem with a LRD mortality rate. Note that the LRD feature is reflected by the volatility term of $\mathcal{B}_L(t)$ through a Riccati-Volterra equation. 

Let $u_0(t)$, $u_1(t)$, and $u_2(t)$ denote the investment amounts in the bank account, zero-coupon longevity bond, and zero-coupon bond respectively. Denote $\tilde{N}(t)$ as a stochastic Poisson process with intensity $k_1\mu(t)$ and $\{z_i\}_{i=1}^{\infty}$ as independent identically distributed (iid) insurance claims. Consider a hedging horizon of $T_0<T$. Then, the wealth process of the insurer reads
\begin{equation}
M(t)=u_0(t)+u_1(t)+u_2(t)- \sum_{i=1}^{\tilde{N}(t)}z_i-\Pi(t),~ t\in[0,T_0],
\end{equation}
where $\Pi = \int_{0}^{t}\pi(s)ds$, $t\in[0,T_0]$, and $\pi(t)$ is a $\mathcal{F}_t$-adapted, square integrable process representing the pension annuity net cash outflow. We denote the filtration generated by $\{M(s):0\leq s\leq t\}$ by $\tilde{\mathcal{H}}_t \supseteq \mathcal{F}_t$. The insurer's wealth $M(t)$ satisfies the following SDE:
\begin{equation}\label{wealth}
dM(t)= (M(t)r(t)+u(t)^\top\nu(t)-\pi(t))dt + u(t)^\top\sigma_S(t)^\top d{\bm W}(t)-zd\tilde{N}(t),
\end{equation}
where $z$ has the same distribution as $z_1$, $u(t)= (u_1(t),u_2(t))^\top$, $\nu(t)=(\nu_L(t),\nu_{\mathcal{B}}(t))^\top$, and
\[\sigma_S(t)^\top=\begin{pmatrix}
\sigma_l&\sigma_b\\
0&\sigma_b
\end{pmatrix}.\]

If a hedging strategy $u(t)$ is a $\mathcal{F}_t$-adapted process and  $\mathbb{E}[\int_{0}^{T_0}|u(s)|^2ds]<\infty$, then it is said to be admissible.  We denote the set of admissible controls as $\mathcal{U}$. 
\begin{definition} \label{def3} The classic mean-variance (MV) hedging problem is defined as
\begin{equation}\label{min}
V(\phi) = \mathop{{\rm min}}\limits_{u(\cdot)\in\mathcal{U}} {\rm Var}(M(T_0))-\frac{\phi}{2}\mathbb{E}[M(T_0)],
\end{equation}
where the parameter $\phi$ measures the insurer's risk averseness. 
\end{definition}

When $\phi=0$, problem \eqref{min} refers to the minimum-variance hedging. For any given $\bar{M}=\mathbb{E}[M(T_0)]$,
\[\mathbb{E}[(M(T_0) - \bar{M})^2]- \frac{\phi}{2}\mathbb{E}[M(T_0)] = \mathbb{E}[(M(T_0)-(\bar{M}+\frac{\phi}{4}))^2]-\frac{\phi}{2}\bar{M} - \frac{\phi^2}{16}.\]
In addition, the MV hedging problem can be embedded into a target-based objective. Specifically, the problem \eqref{min} is equivalent to
\begin{equation}\label{dmin}
\mathop{{\rm min}}\limits_{\bar{M}\in \mathbb{R}} \mathop{{\rm min}}\limits_{u(\cdot)\in\mathcal{U}}\mathbb{E}[(M(T_0)-c)^2]-\frac{\phi}{2}\bar{M} - \frac{\phi^2}{16},
\end{equation}
where $ c= \bar{M}+\frac{\phi}{4}$. The inner minimization problem there refers to a target-based objective that aims to make the wealth close to the target $c$.

\subsection{Hedging mortality with LRD}
Let $\pi(t) = k_2e^{-\int_{0}^{t}\mu(s)ds}$ and $\Sigma(t) = \sigma_S(t)^\top\sigma_S(t)$.  To solve the optimal hedging problem, we introduce two additional probability measures:
\begin{align}\label{new measure}
\frac{d\hat{\mathbb{P}}}{d\mathbb{P}}= e^{-\int_{0}^{t}\xi(s)^\top d\bm{W}(s)-\frac{1}{2}|\xi(s)|^2ds}, ~\frac{d\acute{\mathbb{P}}}{d\mathbb{P}}= e^{-\int_{0}^{t}\zeta(s)^\top d\bm{W}(s)-\frac{1}{2}\zeta(s)^\top\zeta(s)ds}\notag
\end{align}
 with $\xi(t) =(2\varphi(t), 2\vartheta(t))^\top$ and  $\zeta(t) = (\varphi(t),\vartheta(t))^\top$. By the Girsanov theorem, $\hat{\bm{W}}_t \triangleq {\bm W}_t + \int_{0}^{t}\xi(s)ds$ and  $\acute{\bm{W}}_t \triangleq {\bm W}_t + \int_{0}^{t}\zeta(s)ds$ are Wiener processes under $\hat{\mathbb{P}}$ and $\acute{\mathbb{P}}$, respectively. 
Denote $\hat{\mathbb{E}}[\cdot]$ and $\acute{\mathbb{E}}[\cdot]$ as expectations under $\hat{\mathbb{P}}$ and $\acute{\mathbb{P}}$, respectively.  By Theorem \ref{theorem1}, 
\begin{equation}
\acute{\mathbb{E}}\left[\left.e^{-\int_{0}^{s}\mu_{\tau}d\tau}\right|\tilde{\mathcal{H}}_t\right] = {\rm exp}(Y_t^2(T)),\nonumber
\end{equation}
where $Y^2_t(T)$ is equivalent to the $Y_t(T)$ in \eqref{Y} with $b(x) = b^0-\varphi(s)\sigma_{\mu}-b^1x$,  $\sigma(x) = \sigma_{\mu}$, and $W$ replaced by $\acute{W}$; $\acute{\mathbb{E}}[\mu_s|\tilde{\mathcal{H}}_t] = \acute{\mathbb{E}}[X_s|\tilde{\mathcal{H}}_t]$ is equivalent to $\mathbb{E}[X_s|\mathcal{F}_t]$ as defined in \eqref{expectation} with $B = -b^1$,  $b^0(s)$ replaced by $b^0-\varphi(s)\sigma_{\mu}$,  and $W$ replaced by $\acute{W}$. 
In addition, we have the following expressions.
\begin{align}
\hat{\mathbb{E}}\left[\left.e^{-2\int_{t}^{T_0}r(s)ds}\right|\mathcal{F}_t\right] &= {\rm exp}(\alpha_1(t,T_0)+ \beta_1(t, T_0)r(t)),\label{exp2r}\\
\acute{\mathcal{B}}(t,s)=\acute{\mathbb{E}}\left[\left.e^{-\int_{t}^{s}r(u)du}\right|\mathcal{F}_t\right] &={\rm exp}(\alpha_2(t,s)+ \beta_2(t,s)r(t)),\label{expr}
\end{align}
where $\alpha_1(t,T_0)$, $\beta_2(t, T_0)$, $\alpha_2(t,s)$, and $\beta_2(t, s)$ solve the ODEs in Appendix \ref{appendix:affine}.  The following theorem provides the optimal hedging strategy. 
\begin{theorem}\label{theorem 2}
 Consider two stochastic processes
\begin{equation}\label{P}
P(t) =  \frac{e^{-\int_{t}^{T_0}\vartheta^2(s)+ \varphi^2(s)ds}}{\hat{\mathbb{E}}\left[\left.e^{-2\int_{t}^{T_0}r(s)ds}\right|\mathcal{F}_t\right]}
\end{equation}
and 
\begin{equation}\label{Q}
Q(t) = -P(t)[Q_0(t)+c\acute{\mathcal{B}}(t,T_0)],
\end{equation}
where 
\[Q_0(t) =\int_{t}^{T_0}\acute{\mathcal{B}}(t,s)(k_1\mathbb{E}[z]\acute{\mathbb{E}}[\mu_s|\tilde{\mathcal{H}}_t]+ k_2\acute{\mathbb{E}}[e^{-\int_{0}^{s}\mu_{\tau}d\tau}|\tilde{\mathcal{H}}_t])ds,\]
\[\acute{\mathcal{B}}(t,s)=\acute{\mathbb{E}}\left[\left.e^{-\int_{t}^{s}r(u)du}\right|\mathcal{F}_t\right],~ 0\leq t\leq s.\]
Once 
\begin{equation}
dP(t) = \mu_P(t)dt + \eta_1^\top d{\bm W}(t)~  \hbox{and} \quad dQ(t) = \mu_Q(t)dt + \eta_2^\top d{\bm W}(t) \label{PQ}
\end{equation}
under $\mathbb{P}$, the inner minimization problem in \eqref{dmin} has an optimal feedback control: 
\begin{equation}\label{ustar}
u^*_c(t) = -\Sigma(t)^{-1}\left[\left(\nu(t) + \frac{\sigma_S(t)^\top\eta_1(t)}{P(t)}\right)M(t)+ \frac{Q(t)\nu(t)+ \sigma_S(t)^\top\eta_2(t)}{P(t)}\right].
\end{equation}
In addition, the optimal objective value is $P(0)(M(0)+ \frac{Q(0)}{P(0)})^2 + I(0)$, where 
\begin{equation}\label{I(t)}
I(t) = \mathbb{E}\left[\left.\int_{t}^{T_0}P\left\{\mu z^2 + \left(\frac{\eta_2- Q\eta_1}{P^2}\right)^\top\sigma^\perp\left(\frac{\eta_2- Q\eta_1}{P^2}\right)\right\}(s)ds\right|\tilde{\mathcal{H}}_t\right]
\end{equation}
in which $\sigma^\perp = {\rm id} - \sigma_S(t)\Sigma(t)^{-1}\sigma_S(t)^\top$. 
\end{theorem}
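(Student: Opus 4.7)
The strategy is the standard stochastic linear--quadratic / BSDE route: conjecture a quadratic value function, apply It\^o and complete the square in $u$, and then verify that the explicit processes in \eqref{P}--\eqref{Q} satisfy the coefficient equations produced. I posit
$$V(t)=P(t)\bigl(M(t)+Q(t)/P(t)\bigr)^{2}+I(t),$$
with terminal conditions $P(T_{0})=1$, $Q(T_{0})=-c$, $I(T_{0})=0$ so that $V(T_{0})=(M(T_{0})-c)^{2}$. These terminal data are consistent with \eqref{P}, \eqref{Q} and \eqref{I(t)}: indeed $P(T_{0})=1/\hat{\mathbb{E}}[1]=1$ since the exponent vanishes, and $Q_{0}(T_{0})=0$ together with $\acute{\mathcal{B}}(T_{0},T_{0})=1$ gives $Q(T_{0})=-c$, while the integral representation of $I$ is empty at $t=T_{0}$.

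Applying It\^o along the wealth equation \eqref{wealth}, with $dP=\mu_{P}dt+\eta_{1}^{\top}d\bm{W}$ and $dQ=\mu_{Q}dt+\eta_{2}^{\top}d\bm{W}$ as in \eqref{PQ}, and including the compensated-jump contribution from $\tilde N$, the drift of $dV$ takes the quadratic-in-$u$ form
$$\operatorname{drift}(dV)=P\,u^{\top}\Sigma u+2u^{\top}\bigl[(PM+Q)\nu+M\sigma_{S}^{\top}\eta_{1}+\sigma_{S}^{\top}\eta_{2}\bigr]+\mathcal{D},$$
where $\mathcal{D}$ collects every $u$-independent term. Completing the square in $u$ delivers the unique minimiser
$$u^{*}=-\Sigma^{-1}\!\left[\bigl(\nu+\sigma_{S}^{\top}\eta_{1}/P\bigr)M+\bigl(Q\nu+\sigma_{S}^{\top}\eta_{2}\bigr)/P\right],$$
which is precisely \eqref{ustar}. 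Forcing the residual drift $\operatorname{drift}(dV)\big|_{u^{*}}$ to vanish and matching coefficients of $M^{2}$, $M$ and $1$ separately produces (i) a Riccati-type BSDE for $(P,\eta_{1})$, (ii) a linear BSDE for $(Q,\eta_{2})$ driven by $(P,\eta_{1})$ together with the expected-claim and pension-outflow forcing $k_{1}\mathbb{E}[z]\mu_{t}$ and $\pi(t)$, and (iii) an explicit expression for $\mu_{I}$ in terms of $P,Q,\eta_{1},\eta_{2}$, whose time-integral is exactly \eqref{I(t)} (with $\sigma^{\perp}=\mathrm{id}-\sigma_{S}\Sigma^{-1}\sigma_{S}^{\top}$).

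The main obstacle is to verify that the explicit $P$ in \eqref{P} and $Q$ in \eqref{Q} solve these BSDEs in the presence of long-range dependence. For $P$, I exploit the Markov--affine form \eqref{exp2r} to write $\hat{\mathbb{E}}[e^{-2\int_{t}^{T_{0}}r\,ds}\mid\mathcal{F}_{t}]=\exp(\alpha_{1}(t,T_{0})+\beta_{1}(t,T_{0})r_{t})$, then apply It\^o to $\log P$ and invoke the ODEs of Appendix~\ref{appendix:affine} for $\alpha_{1},\beta_{1}$; this delivers $\mu_{P}$ and $\eta_{1}$ in closed form and reduces the $M^{2}$-equation to an algebraic identity. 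For $Q=-P[Q_{0}+c\acute{\mathcal{B}}(t,T_{0})]$, the real difficulty is that $Q_{0}$ contains the non-Markov conditional expectations $\acute{\mathbb{E}}[\mu_{s}\mid\tilde{\mathcal{H}}_{t}]$ and $\acute{\mathbb{E}}[e^{-\int_{0}^{s}\mu_{\tau}d\tau}\mid\tilde{\mathcal{H}}_{t}]$ on a non-semimartingale $\mu$. I resolve this by invoking Theorem~\ref{theorem1} under $\acute{\mathbb{P}}$: the survival factor equals $\exp(Y_{t}^{2}(s))$ with $Y_{t}^{2}$ genuinely a semimartingale in $t$ by \eqref{Y}, while $\acute{\mathbb{E}}[\mu_{s}\mid\tilde{\mathcal{H}}_{t}]$ admits the Brownian representation \eqref{expectation}. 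A stochastic Fubini argument---legitimate under the $L^{2}$-kernel hypothesis on $K$---lets me differentiate under the $s$-integral defining $Q_{0}$, after which the product rule produces $\mu_{Q}$ and $\eta_{2}$ explicitly and verification of the linear BSDE for $Q$ reduces to algebra. With $P,Q,I$ validated, a standard verification argument closes the proof: because $P>0$ and $\Sigma\succ 0$, the excess drift $P(u-u^{*})^{\top}\Sigma(u-u^{*})$ is nonnegative, so $V$ is a martingale under $u^{*}$ and a submartingale under every $u\in\mathcal{U}$, and admissibility together with the integrability condition \eqref{condition} justifies taking expectations to obtain $\mathbb{E}[(M(T_{0})-c)^{2}]\ge V(0)=P(0)(M(0)+Q(0)/P(0))^{2}+I(0)$, with equality at $u^{*}$.
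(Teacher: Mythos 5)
Your proposal is correct and follows essentially the same route as the paper: a completion-of-squares/BSDE verification built on the composite process $P(t)\bigl(M(t)+Q(t)/P(t)\bigr)^{2}+I(t)$, with the explicit $P$ handled through the Markov--affine exponential \eqref{exp2r} and the non-Markovian terms in $Q_{0}$ handled through the semimartingale representation of $Y_{t}^{2}$ and the conditional-expectation formula \eqref{expectation} from Theorem \ref{theorem1}. The only cosmetic difference is ordering --- you derive the coefficient BSDEs from the ansatz and then verify the closed forms, whereas the paper computes the dynamics of the closed forms first and exhibits the drift directly in completed-square form, closing with an explicit stopping-time localization in place of your martingale/submartingale phrasing.
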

\begin{proof}
 See Appendix \ref{appendix:proof}.
\end{proof}
\begin{pro}\label{proeta}
Then, the diffusion coefficients in \eqref{PQ} are $\eta_1 = (0, \eta_{12})^\top$, where $\eta_{12} = -P(t)\beta_1(t, T_0)\sigma_r$ and $\eta_2 = (\eta_{21}, \eta_{22})^\top$ in which
	\[\eta_{21} = -P(t)\int_{t}^{T_0}\acute{\mathcal{B}}(t,s)\left(k_1\mathbb{E}[z]E_{B}(s-t)\sigma_{\mu}+ k_2\acute{\mathbb{E}}\left[\left.e^{-\int_{0}^{s}\mu_{\tau}d\tau}\right|\tilde{\mathcal{H}}_t\right]\psi_2(s-t)\sigma_{\mu}\right)ds,\]
	\begin{align}
	\eta_{22} &= -P(t)\bigg\{\int_{t}^{T_0}\acute{\mathcal{B}}(t,s)\left(k_1\mathbb{E}[z]\acute{\mathbb{E}}[\mu_s|\tilde{\mathcal{H}}_t]+ k_2\acute{\mathbb{E}}\left[\left.e^{-\int_{0}^{s}\mu_{\tau}d\tau}\right|\tilde{\mathcal{H}}_t\right]\right)\beta_2(t, s)\sigma_rds \notag\\
	&+ c\acute{\mathcal{B}}(t, T_0)\beta_2(t, T_0)\sigma_r\bigg\}
	+ P(t)[Q_0(t)+c\acute{\mathcal{B}}(t, T_0)]\beta_1(t, T_0)\sigma_r,
	\end{align}
where $\beta_1(t, T_0)$ is defined in \eqref{exp2r}, $\beta_2(t, s)$ in \eqref{expr}, $E_{B}$ in Theorem \ref{theorem1} with $B = -b^1$, and $\psi_2\in \mathcal{L}^2([0,s],\mathbb{C})$ solves the Riccati equation $\psi_2 = (-1 - \psi_2b^1)*K$. 
\end{pro}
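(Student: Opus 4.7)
The plan is to derive both $\eta_1$ and $\eta_2$ by a direct application of It\^o's formula to the semimartingale representations already supplied by Theorem \ref{theorem1} and equations \eqref{exp2r}--\eqref{expr}. A helpful observation up front is that diffusion coefficients are invariant under equivalent measure changes, so I can freely read them off whether the representation is stated under $\mathbb{P}$, $\hat{\mathbb{P}}$ or $\acute{\mathbb{P}}$; only the explicit representation matters, not the measure in which it was written.

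For $\eta_1$, I would combine \eqref{P} with \eqref{exp2r} to write $P(t)$ as the ratio of the purely deterministic factor $\exp\bigl(-\int_t^{T_0}(\vartheta^2+\varphi^2)\,ds\bigr)$ and the exponential-affine quantity $\exp(\alpha_1(t,T_0)+\beta_1(t,T_0)r_t)$. Because the interest rate $r$ is driven solely by $W'$ with volatility $\sigma_r$, It\^o's lemma instantly shows that $P$ has no $W$-component and a $W'$-component equal to $-P(t)\beta_1(t,T_0)\sigma_r$, which matches $\eta_1=(0,\eta_{12})^\top$.

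For $\eta_2$, I would factor $Q(t)=-P(t)S(t)$ with $S(t)=Q_0(t)+c\acute{\mathcal{B}}(t,T_0)$, apply the product rule so that the diffusion of $Q$ equals $-S(t)\eta_1-P(t)\eta_S(t)$, and compute $\eta_S$ termwise. The $W'$-diffusion of $\acute{\mathcal{B}}(t,s)$ from \eqref{expr} is $\acute{\mathcal{B}}(t,s)\beta_2(t,s)\sigma_r$. For $Q_0(t)=\int_t^{T_0}\acute{\mathcal{B}}(t,s)A(t,s)\,ds$, with $A(t,s)=k_1\mathbb{E}[z]\acute{\mathbb{E}}[\mu_s|\tilde{\mathcal{H}}_t]+k_2\acute{\mathbb{E}}[e^{-\int_0^s\mu_\tau d\tau}|\tilde{\mathcal{H}}_t]$, I would invoke the Leibniz rule (the boundary contribution at $s=t$ is $\tilde{\mathcal{H}}_t$-measurable and so enters only the drift) and use Theorem \ref{theorem1} to read off the $t$-diffusions of the two conditional expectations: representation \eqref{expectation} gives a $W$-diffusion equal to $E_B(s-t)\sigma_\mu$ for $\acute{\mathbb{E}}[\mu_s|\tilde{\mathcal{H}}_t]$, and the semimartingale form \eqref{Y} applied under $\acute{\mathbb{P}}$ (where $m\equiv 0$, $\eta=1$, $B=-b^1$) yields a $W$-diffusion equal to $\acute{\mathbb{E}}[e^{-\int_0^s\mu_\tau d\tau}|\tilde{\mathcal{H}}_t]\,\psi_2(s-t)\sigma_\mu$ with $\psi_2$ solving the stated Volterra--Riccati equation. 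Assembling these pieces and recognising the factor $Q_0(t)+c\acute{\mathcal{B}}(t,T_0)$ in the coefficient of $\beta_1(t,T_0)\sigma_r$ then produces exactly the claimed $\eta_{21}$ (a $W$-contribution coming entirely from $\eta_S$) and $\eta_{22}$ (a $W'$-contribution combining $-S\eta_{12}$ with the $W'$-parts of $\eta_S$).

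The main obstacle will be the careful handling of the Volterra/LRD structure in the second conditional expectation: since $\exp(Y^2_t(s))$ is non-Markovian in the historical mortality path, one cannot differentiate it via a Feynman--Kac PDE and must instead rely on the semimartingale representation of Theorem \ref{theorem1}, together with the fact that $\psi_2(s-\cdot)\in\mathcal{L}^2([0,s],\mathbb{C})$ so that a stochastic Fubini argument justifies interchanging the $ds$-integration with the $dW$-differential. Once these measurability and integrability points are in place, the remainder is a routine (if tedious) application of the It\^o product rule and the bookkeeping sketched above.
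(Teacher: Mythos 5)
Your proposal is correct and follows essentially the same route as the paper: It\^o's formula applied to the exponential--affine representations \eqref{exp2r}--\eqref{expr}, the martingale representations of $\acute{\mathbb{E}}[\mu_s|\tilde{\mathcal{H}}_t]$ and $\acute{\mathbb{E}}[e^{-\int_0^s\mu_\tau d\tau}|\tilde{\mathcal{H}}_t]$ from Theorem \ref{theorem1} (yielding the $E_B(s-t)\sigma_\mu$ and $\psi_2(s-t)\sigma_\mu$ coefficients), and the product rule on $Q=-P\,[Q_0+c\acute{\mathcal{B}}(\cdot,T_0)]$; the paper merely organises the same computation through $dP^{-1}$ and $d(Q/P)$ rather than $dP$ and $d(-PS)$. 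Your explicit remarks on measure-invariance of the diffusion coefficients and the stochastic Fubini interchange are points the paper uses implicitly.
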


\begin{pro}\label{hedgingstrategy}
	The optimal  hedging strategy $u^*(t) =(u_1^*(t),u_2^*(t))^\top$  to problem  \eqref{min} is given by
	\begin{align}
	u_1^*(t)=&-\frac{1}{\sigma_l(t)}\left\{\left[M(t)-Q_0(t)-\left(\bar{M}^* + \frac{\phi}{4}\right)\acute{\mathcal{B}}(t,T_0)\right]\varphi(t) + \frac{\eta_{21}(t)}{P(t)}\right\}\label{u1},\\
	u_2^*(t)=&-\frac{1}{\sigma_b(t)}\left\{\left[M(t)-Q_0(t)-\left(\bar{M}^* + \frac{\phi}{4}\right)\acute{\mathcal{B}}(t,T_0)\right]\vartheta(t) + \frac{M(t)\eta_{12}(t)+\eta_{22}(t)}{P(t)}\right\}\notag\label{u2}\\
	&-u_1^*(t),
	\end{align}
	where 
	\[\bar{M}^* = \frac{\frac{\phi}{4}(1-P(0)\acute{\mathcal{B}}^2(0,T_0))+P(0)\acute{\mathcal{B}}(0,T_0)(M(0)-Q_0(0))}{P(0)\acute{\mathcal{B}}^2(0,T_0)}.\]
	\label{pro:u}
\end{pro}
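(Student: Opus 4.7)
The plan is to view Proposition \ref{hedgingstrategy} as an unpacking of Theorem \ref{theorem 2}: first carry out the (trivial) outer minimization over $\bar{M}$ in the equivalent formulation \eqref{dmin} to pin down $\bar{M}^*$, and then expand the vector identity \eqref{ustar} componentwise using the explicit two-dimensional structure of $\Sigma(t)$, $\sigma_S(t)$, $\nu(t)$ and the expressions for $\eta_1,\eta_2$ supplied by Proposition \ref{proeta}.

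For the outer step, Theorem \ref{theorem 2} states that the inner minimum for a fixed target $c=\bar{M}+\phi/4$ equals $P(0)(M(0)+Q(0)/P(0))^2+I(0)$. From the definition \eqref{Q}, $Q(0)/P(0)=-(Q_0(0)+c\acute{\mathcal{B}}(0,T_0))$, so the objective in \eqref{dmin} becomes
\[
P(0)\bigl(M(0)-Q_0(0)-(\bar{M}+\tfrac{\phi}{4})\acute{\mathcal{B}}(0,T_0)\bigr)^2+I(0)-\tfrac{\phi}{2}\bar{M}-\tfrac{\phi^2}{16},
\]
which is a strictly convex quadratic in $\bar{M}$ (note that $I(0)$ does not depend on $\bar{M}$, since $Q$ enters $I$ only through the combination $\eta_2-Q\eta_1$, which is unaffected by the free constant $c$ up to the diffusion bookkeeping handled inside $\eta_2$). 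Setting the first-order condition to zero and solving the linear equation yields the stated expression for $\bar{M}^*$.

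For the componentwise step, fix $c=\bar{M}^*+\phi/4$ and substitute the explicit forms
\[
\sigma_S(t)^\top=\begin{pmatrix}\sigma_l&\sigma_b\\ 0&\sigma_b\end{pmatrix},\qquad \Sigma(t)^{-1}=\frac{1}{\sigma_l^2\sigma_b^2}\begin{pmatrix}\sigma_b^2&-\sigma_b^2\\-\sigma_b^2&\sigma_l^2+\sigma_b^2\end{pmatrix},
\]
together with the identities $\nu=(\vartheta\sigma_b+\varphi\sigma_l,\ \vartheta\sigma_b)^\top$, $\sigma_S^\top\eta_1=(\sigma_b\eta_{12},\sigma_b\eta_{12})^\top$, and $\sigma_S^\top\eta_2=(\sigma_l\eta_{21}+\sigma_b\eta_{22},\sigma_b\eta_{22})^\top$ from Proposition \ref{proeta}, into the vector formula \eqref{ustar}. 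Using $Q(t)/P(t)=-(Q_0(t)+c\acute{\mathcal{B}}(t,T_0))$ one more time, the first coordinate of $u_c^*(t)$ collapses (after the difference of the two rows of $\Sigma^{-1}$) to the single-term expression \eqref{u1} for $u_1^*$; the second coordinate then reads $u_2^*=-u_1^*-A_2/\sigma_b^2$, where $A_2$ is the second component of the bracketed vector in \eqref{ustar}, and a direct simplification of $A_2/\sigma_b$ produces \eqref{u2}.

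The step I expect to require the most care is the algebraic bookkeeping in the second part: correctly pairing the $\varphi\sigma_l$ terms (which must end up only in $u_1^*$) with the $\vartheta\sigma_b$ and $\eta_{12},\eta_{22}$ terms (which enter $u_2^*$), and verifying that the cancellation $u_2^*+u_1^*=-A_2/\sigma_b^2$ is what allows $u_2^*$ to be written in the clean form \eqref{u2} without residual dependence on $\eta_{21}$. Beyond this bookkeeping no new ideas are needed, since all the non-Markovian and long-range-dependent content has already been absorbed into $\eta_1,\eta_2,Q_0$, and $\acute{\mathcal{B}}$ via Theorem \ref{theorem 2} and Proposition \ref{proeta}.
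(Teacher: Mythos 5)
Your proposal is correct and follows essentially the same route as the paper: minimize the quadratic $P(0)\bigl(M(0)-Q_0(0)-(\bar{M}+\tfrac{\phi}{4})\acute{\mathcal{B}}(0,T_0)\bigr)^2-\tfrac{\phi}{2}\bar{M}$ over $\bar{M}$ to get $\bar{M}^*$, then substitute $c=\bar{M}^*+\tfrac{\phi}{4}$ into the feedback law \eqref{ustar}; the paper leaves the componentwise expansion implicit ("the result follows"), whereas you carry it out explicitly and correctly, including the identity $u_2^*=-u_1^*-A_2/\sigma_b^2$. Your parenthetical that $I(0)$ is independent of $\bar{M}$ is true (here $\sigma_S$ is square and invertible, so $\sigma^\perp=0$ and $I$ reduces to the $\mu z^2$ term), a point the paper also uses but does not justify.
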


The explicit optimal hedging strategy  in Proposition \ref{hedgingstrategy} incorporates the LRD feature through $\eta$ which depends on the mortality rate path and the kernel $K$ as shown in Proposition \ref{proeta}. In addition, the Hurst parameter is contained in the  kernel function $K$. 

\section{Impact of LRD: Numerical studies}\label{numerical}
In this section, we numerically examine the impact of long-range dependence on the prices of insurance products and the hedging effectiveness. To do so, we contrast the LRD mortality model with its Markovian counterpart. For the latter case, the Hurst parameter $H$ is set to 1/2. As the LRD appears when $H >1/2$, we examine the effect when $H$ falls into this range.

\subsection{Survival probability}
\label{section:survival}
As the basic quantity, we begin with the survival probability. Under the Volterra mortality model, we assume that process $X$ satisfies Equation \eqref{fractional X} which is a Volterra type of Vasicek model.  The Vasicek model is a special case with $\alpha =1$ or $H=1/2$. We compare the Vasicek and VV mortality models using two different values of $H$ while the other parameters are kept constant. It is empirically estimated by \cite{YPC} that the $H$ is around 0.83 for mortality data. Thus, we choose an $\alpha$ of 1.33 for the VV mortality model.  Table \ref{parameters} summarizes the remaining parameters used in this numerical study. \textcolor{black}{ The parameters chosen have similar magnitudes to those in \cite{Biffis} for the case of Markovian model.}
     
\begin{table}[H]
	\centering                                                                                                     
	\begin{tabular}{lcccccccc}  
		\toprule
		\toprule
		Projection & $\alpha$ &\textcolor{black}{$m(t)$} & $\eta$ & $\lambda$ & $\theta$ & $\sigma$ & $t$ & $X_0$\\
		\midrule
		\midrule
		A&1.33& SIM92 & 0.2 & 0.5 & 0.0009 & 0.01&40&0.001\\
		B&1& SIM92 & 0.2 & 0.5 & 0.0009 & 0.01&40&0.001\\
		\bottomrule
	\end{tabular}
	\caption{Parameter values for the mortality model}
	\label{parameters}
\end{table}
\textcolor{black}{\begin{remark}\label{remark:paramater}
	The SIM92 in Table \ref{parameters} is a dataset from the Italian National Institute of Statistics (ISTAT) which reports Italian population life tables.  SIM92 is usually employed to price assurance.  Such a setting for $m(t)$ has been adopted in \cite{Biffis}. Specifically, after fixing the other parameter values, the $m(t)$ is calibrated to fit the SIM92 table, so the functional form of $m(t)$ is not explicitly shown here.
\end{remark}}

\textcolor{black}{Although parameter values are assigned in this numerical experiment, we stress that, in reality, the parameters can be calibrated to observed prices of actuarial products using the set of the closed-form pricing formulas derived in this paper. In addition, the parameter $\theta$ in \eqref{survival} or $b^0$ in Definition \ref{def1} can be set as a bounded measurable function of time $t$ rather than a constant as in our example.}

 \textcolor{black}{In Table \ref{parameters}, the symbol $t$ stands for the age group. For example, when we set $t = 40$, it corresponds to a group of the survival population at the age of 40. In Figures \ref{fig1}(a) and \ref{fig2}(a), we simulate two different sample paths of  $X$ for this group of individuals over the time interval $[0,t]$.}  
  Under the VV mortality model, the historical sample paths of $X$ affect the estimated survival probability, whereas the Vasicek model does not due to its Markovian nature.  Given the parameters in Table \ref{parameters} and  \eqref{survival}, we directly calculate survival probabilities from the two models.   	\textcolor{black}{ By \eqref{survival} and Theorem \ref{theorem1}, 
  	\begin{align}\label{markov survival}
  	&\mathbb{P}(\tau> T|\mathcal{F}_t) \notag \\
  	&= e^{-\int_{t}^{T}m(s)ds}\exp\left( -\eta\int_{t}^{T}\mathbb{E}[X_s|\mathcal{F}_t]ds+ \frac{1}{2}\int_{t}^{T}\psi(T-s)a(\mathbb{E}[X_s|\mathcal{F}_t])\psi(T-s)^\top ds\right), 
  	\end{align}
 for $T > t$.  Under the Vasicek mortality model, the survival probability depends only on $X_t$ $(t = 40)$ as $\mathbb{E}[\mu_s|\mathcal{F}_t] = \mu_t$. However, under the VV mortality model, the expression of $\mathbb{E}[\mu_s|\mathcal{F}_t]$ given in \eqref{expectation} depends on the whole historical path of $X$. Based on the simulated sample paths, we calculate the survival probabilities for the interval $T \in[t, x^*]$, where we set the maximum age at $x^* = 109$.} 

Figures \ref{fig1}(b) and \ref{fig2}(b) show the survival probabilities \textcolor{black}{that correspond to the historical records in Figure \ref{fig1}(a) and \ref{fig2}(a), respectively.} The solid line is the survival probability curve with LRD and the dashed line is that of the Markovian model. Depending on the historical record, the LRD survival probability can be higher or lower than the Markovian survival probability.  \textcolor{black}{This indicates that the historical sample path has impact on the survival probability when LRD is present.}  The effect is more pronounced for the middle age group. This is reasonable because the young age group has a shorter historical record and the old age group may be restricted by the human age limit. This kind of middle-age effect may result in a significant effect on insurance pricing. We further examine it with a concrete insurance product.
 
 \begin{figure}[H]
	\centering
	\subfigure[]{\includegraphics[width=7.5cm]{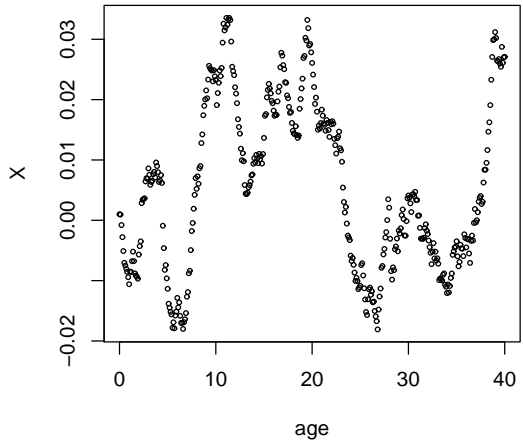}}
	\subfigure[]{\includegraphics[width=7.5cm]{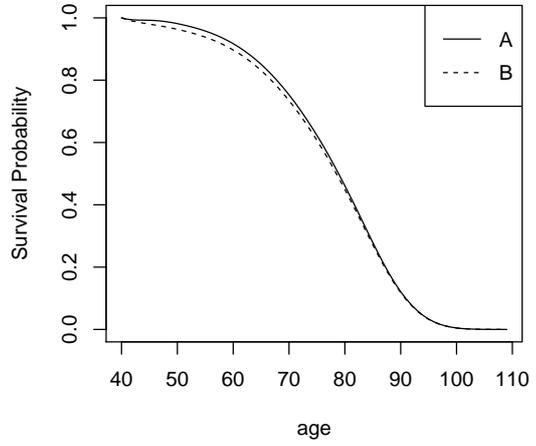}}
	\caption{A sample historical path of $X$ that makes the survival probability with LRD higher than its Markovian counterpart.}
	\label{fig1}
\end{figure}

\begin{figure}[H]
	\centering
	\subfigure[]{\includegraphics[width=7.5cm]{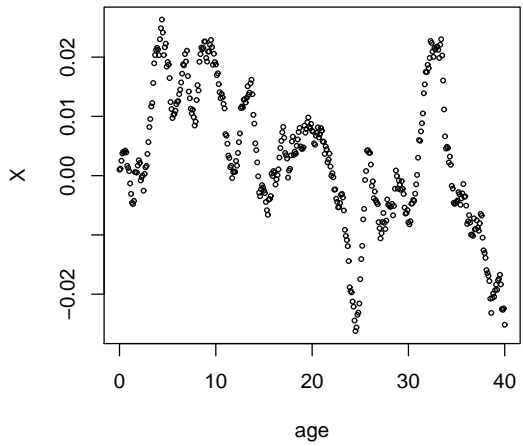}}
	\subfigure[]{\includegraphics[width=7.5cm]{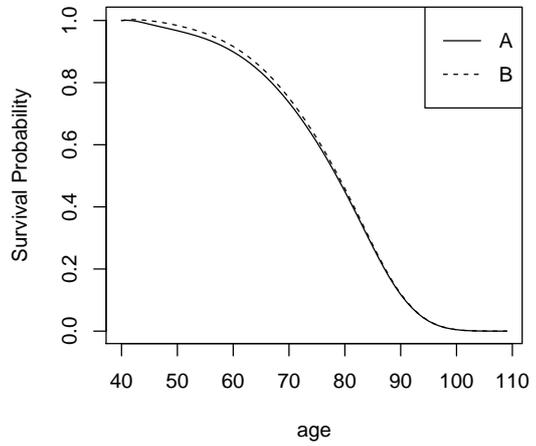}}
	\caption{A sample of historical path of $X$ that makes the survival probability with LRD lower than its Markovian counterpart.}
	\label{fig2}
\end{figure}

\subsection{Impact on annuity}
\textcolor{black}{To examine the effect of LRD on annuity prices, we compare the prices calculated by the two models.} We are interested in annuities because they are popular insurance and pension products around the globe.

The numerical experiment is constructed as follows. Consider a 20-year deferred annuity and its payoff is a unit amount each year. For simplicity, we assume that $\mathbb{Q} = \mathbb{P}$ in this part so that no additional effort is required to identify the pricing measure.  The simulation and calculation are made with the parameters in Table \ref{parameters}. In addition, we specify the short interest rate $r_t = Z_t$ as follows.
$$dZ_t = (\widetilde{b}^0 - \widetilde{b}^1Z_t)dt + \sigma_rdW',$$
where $\widetilde{b}^0=0.01$, $\widetilde{b}^1=0.5$, $\sigma_r=0.3$, and $Z(40)=0.01$. Then we use \eqref{annuity} directly to calculate the price of the annuity and $t' = 20$.
\begin{figure}[H]
	\centering
	\subfigure[]{\includegraphics[width=7.5cm]{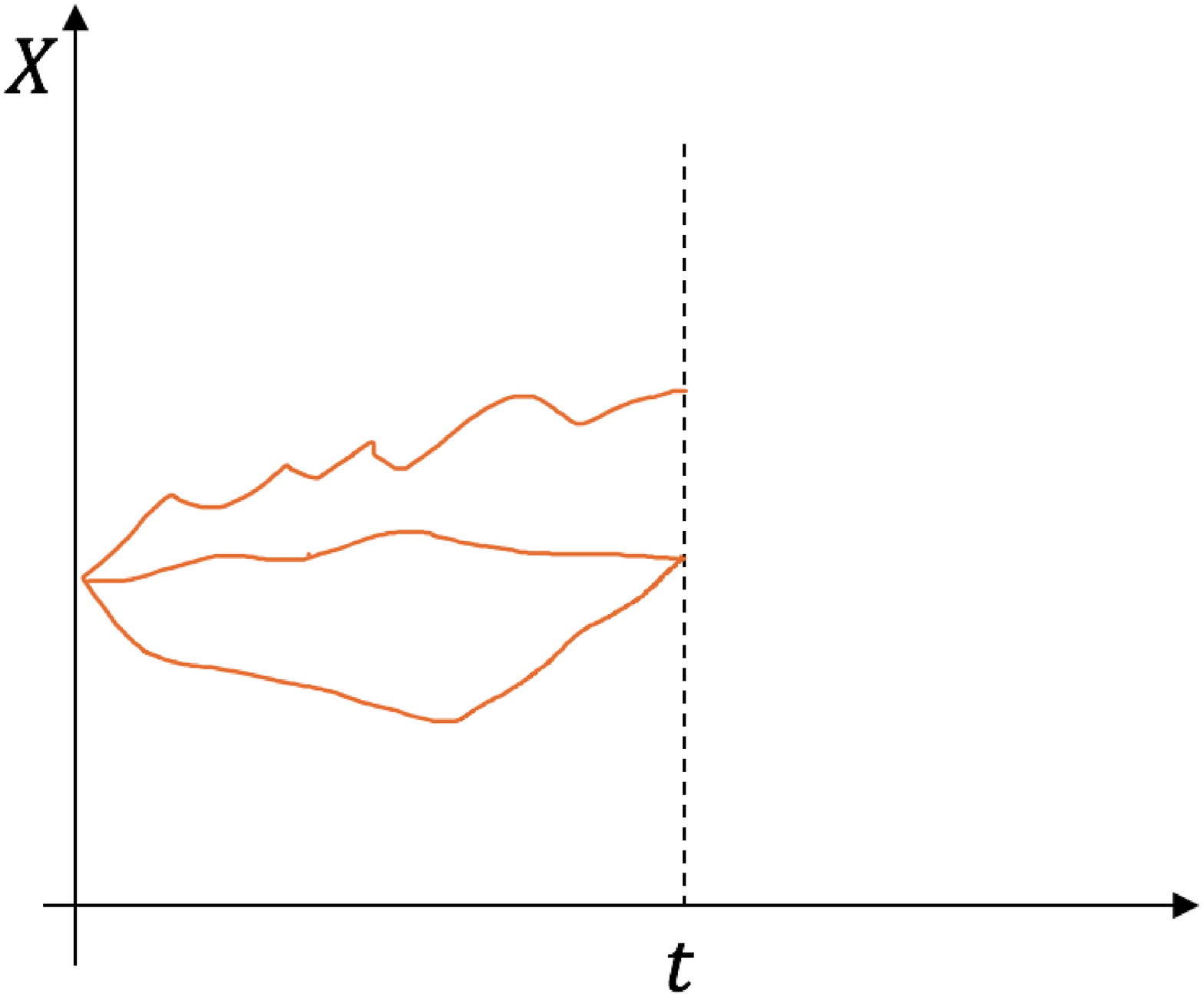}\label{subfig3}}
	\subfigure[]{\includegraphics[width=7.5cm]{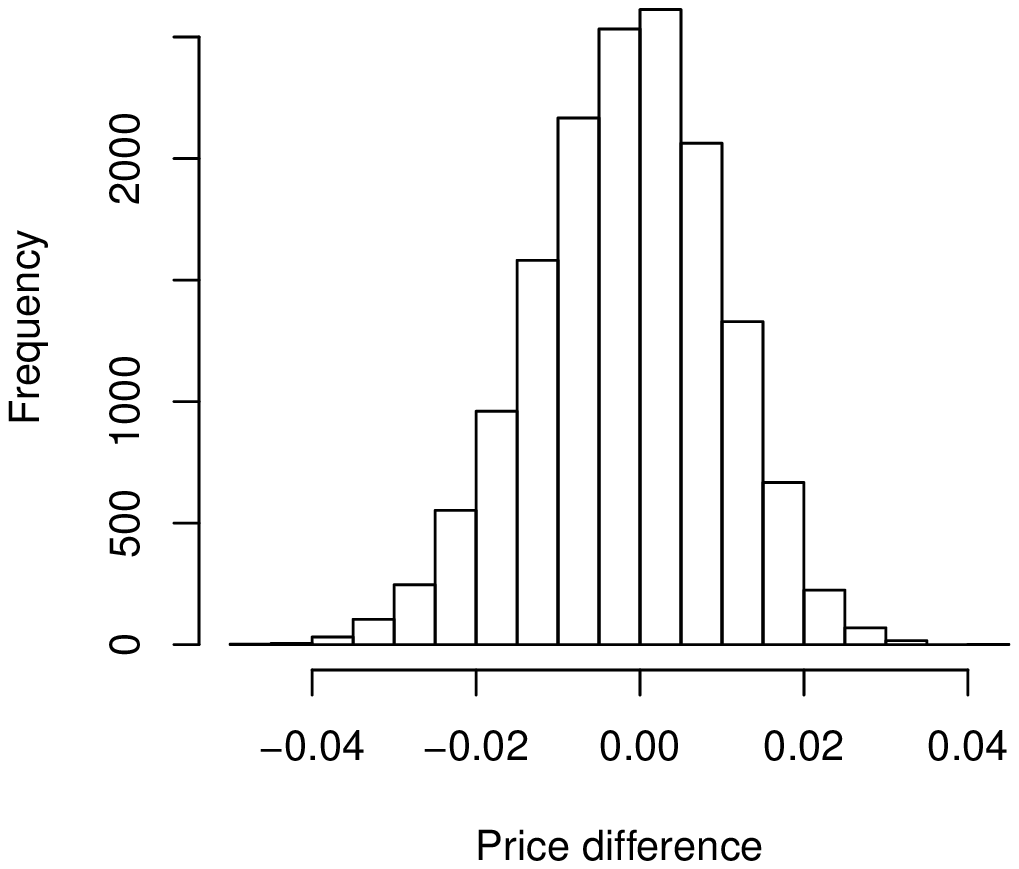}\label{subhist}}
	\caption{\textcolor{black}{ (a) Examples of historical paths for $X$ and (b) histogram of percentage difference in annuity prices between the two models}}
	\label{histogram}
\end{figure}

To demonstrate the LRD effect, we generate 15,000 sample paths of $X$ over the time interval $[0,t]$. In Figure \ref{subfig3}, we illustrate that the last two sample paths meet at time $t$. The classic Markovian model ignores how they come to this point and assigns the same price to the two scenarios \textcolor{black}{as explained in \eqref{markov survival}.}  However, our LRD mortality model takes the historical record into account and assigns two different prices \textcolor{black}{ as shown in \eqref{annuity} and Theorem \ref{theorem1}.}  The problem is to determine how large the difference between these two models is. Clearly, the difference is not a single number as there are uncountably many ways to reach the same point. Therefore, we examine the distribution of the price difference for different historical paths. 

To do so,  Figure \ref{subhist} plots a histogram of the percentage difference of the annuity prices between the LRD and Markovian models. First, the mean of the distribution is near zero, implying that the Markovian mortality model offers an appropriate estimate of the averaged price even under the LRD feature. However, the dispersion of the histogram is still obvious. The price difference between the two models can reach 4\% even for a linear annuity product, and this 4\% difference seems not negligible in practice. The discrepancy may be amplified for products with leveraging effects such as those with optionality. Even for this annuity product, we can see the volatility could be higher compared to the Markovian model due to incorrect predictions of the mortality rate if the realized mortality has the LRD feature.

\textcolor{black}{To illustrate the influence of LRD on products with optionality, consider a European call option on a zero-coupon longevity bond $\mathcal{B}_{L}(t, T)$ with strike $D$ and expiration time $T_1$, where $T$ is the fixed maturity of the bond and $T_1$ is the expiration date of the option so that $0 \leq t \leq T_1< T$. Specifically, the call option payoff reads $V_{0}(\mathcal{B}_{L}(T_1, T)) = \max(\mathcal{B}_{L}(T_1, T) - D, 0)$. We want to focus on the effect of LRD mortality rate, and therefore assume a constant interest rate $r$ and $m(\cdot) =0$. By \eqref{fractional X} and \eqref{longevity}, we have
\begin{eqnarray}
d\mathcal{B}_L(t, T) =  \mathcal{B}_L(t, T)\left[rdt + \psi(T- t)\sigma dW^{\mathbb{Q}}_t\right], 
\label{BLQ}
\end{eqnarray}
under the pricing measure, where $\psi$ solves $\psi = (-\eta -\lambda \psi )*K$. As \eqref{longevity} is the dynamic of $\mathcal{B}_L(t, T)$ under $\mathbb{P}$, the corresponding $\mathbb{Q}$ dynamics in \eqref{BLQ} is one in which the term $\nu_L$ in \eqref{longevity} is absorbed into the $\mathbb{P}$-Brownian motion to form a $\mathbb{Q}$-Brownian motion. Hence, the call value function $V_0(\mathcal{B}_L, t)$ resembles the Black-Scholes formula. Specifically, 
\begin{eqnarray*}
V_0(\mathcal{B}_L, t) &=& \Phi(d_1)\mathcal{B}_L(t, T)-\Phi(d_2)De^{-r(T_1-t)},\cr
d_1 &=& \frac{1}{\psi(T-t)\sigma\sqrt{T_1-t}}\left[\ln\left(\frac{\mathcal{B}_L(t, T)}{D}\right) + \left(r + \frac{1}{2}\psi^2(T-t)\sigma^2\right)(T_1-t) \right], \cr
d_2 &=& d_1 - \psi(T-t)\sigma\sqrt{T_1-t},
\end{eqnarray*}
where $\Phi(\cdot)$ is the cumulative distribution function of the standard normal distribution.}

\textcolor{black}{Let us make a numerical comparison in terms of percentage difference in option price between the VV and Markovian models. Let $r = 0.01$,  $T = 5$, and $T_1 = 2$,  and set the other parameters as in Table \ref{parameters}. Assume $\mathcal{B}_L(t, T) = 0.8 \triangleq D_0$, the benchmarking at-the-money (ATM) strike, at the option issuance time. Note that the historical path of the mortality rate is subsumed into the longevity bond price $\mathcal{B}_L(t, T)$. By varying the strike $D$ from 0.8 (ATM) to 0.832 (4\% in-the-money), option prices under the two models are shown in Figure \ref{fig:option1} while the percentage difference in price is shown in Figure \ref{fig:option2}. When the strike increases by 4\%, the percentage difference in option price could reach 20\% which is quite significant. We mention the 4\% increase in strike because the price of an annuity can reach a 4\% difference in price in the former analysis. When the strike is set to make the option ATM, the difference in the longevity bond price results in a 4\% difference in setting the ATM strike. This example shows that optionality may further amplify the pricing difference. }

\begin{figure}[H]
	\centering
	\subfigure[]{\includegraphics[width=7cm]{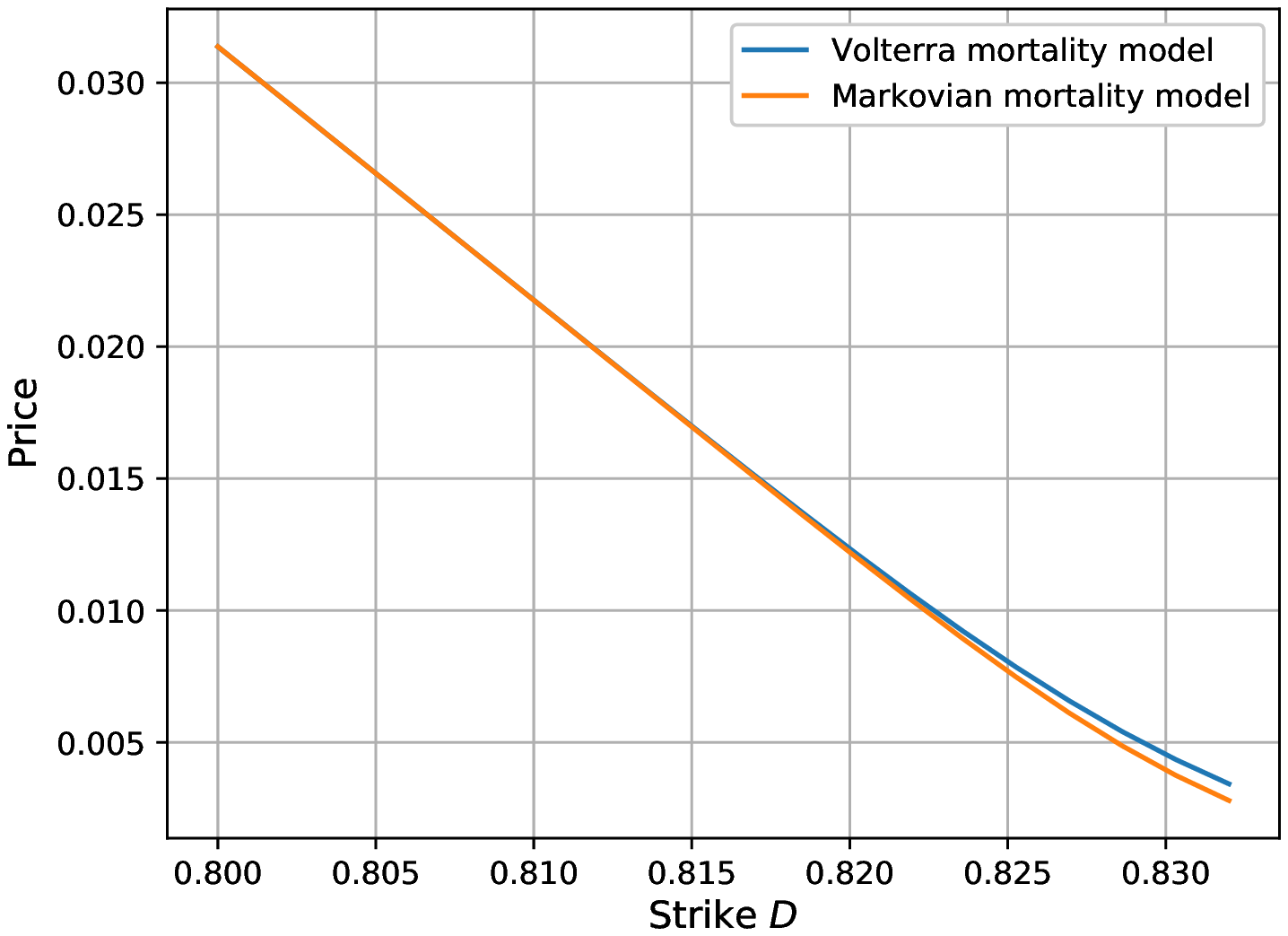}\label{fig:option1}}
	\subfigure[]{\includegraphics[width=7cm]{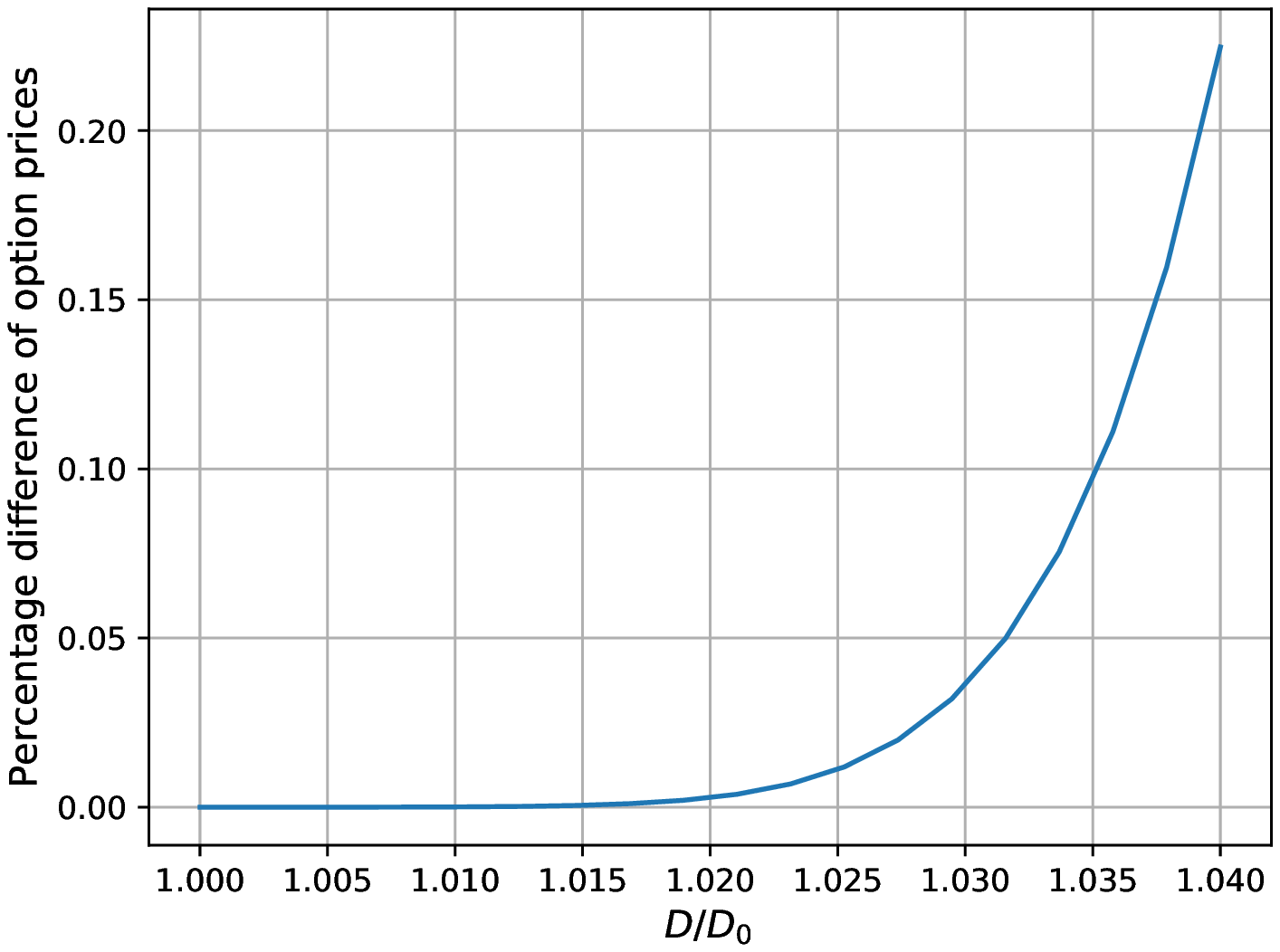}\label{fig:option2}}
	\caption{(a) Option prices and (b) difference of the prices under the two models}
	\label{fig:option}
\end{figure}

\subsection{LRD effect on longevity hedging}
We further examine the hedging with LRD. In this part, we still consider the fractional kernel in  \eqref{mortality} so that $K(t)=\frac{t^{\alpha-1}}{\Gamma(\alpha)}$.  Again, we first simulate a pair of sample paths of mortality and interest rates as shown in Figure \ref{sample paths}. The model parameters used are $\mu(0) = 0.15$, $b^1 = 0.5$, $b^0 = 0.1$, $\sigma_{\mu} = 0.05$, $r(0) = 0.04$, $\widetilde{b}^1$ = 0.6, $\widetilde{b}^0 = 0.02$, $\sigma_r = 0.01$, $T_0 = 5$, $\alpha = 1.33$, $k_1 = 1$, $k_2 = 10$, and $\mathbb{E}[z]=2$. 
\begin{figure}[H]
	\centering
	\subfigure[]{\includegraphics[width=7cm]{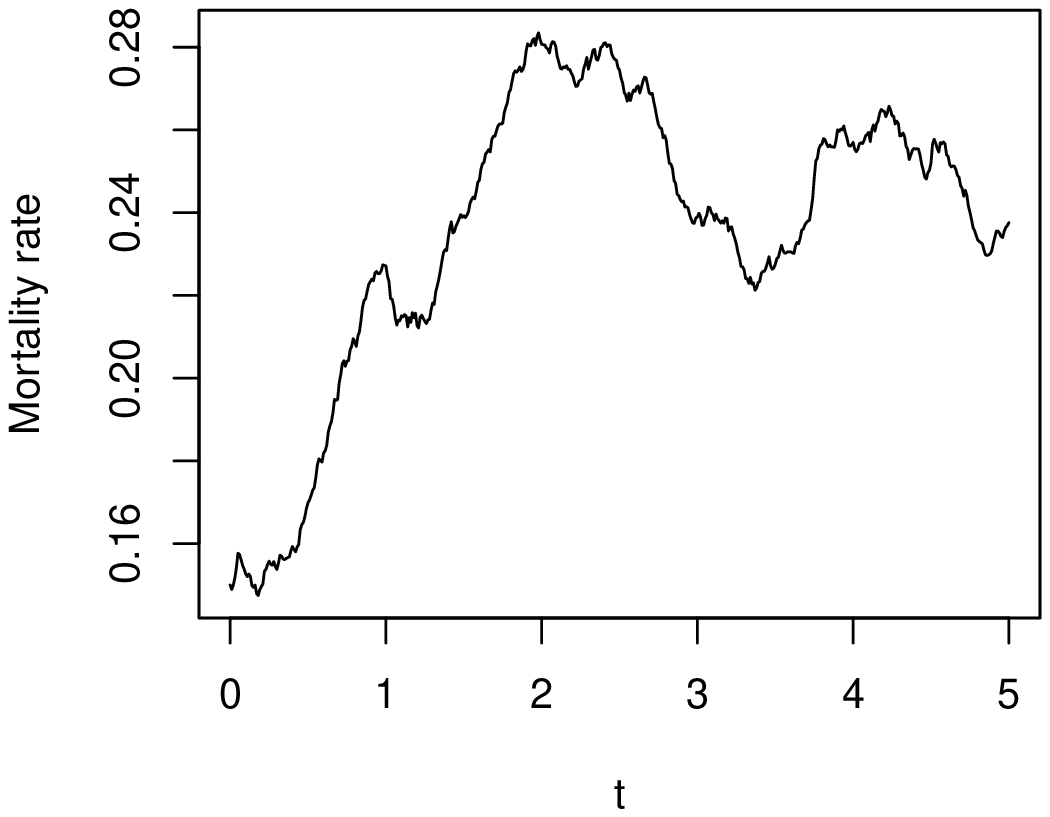}}
	\hspace{0ex}
	\subfigure[]{\includegraphics[width=7cm]{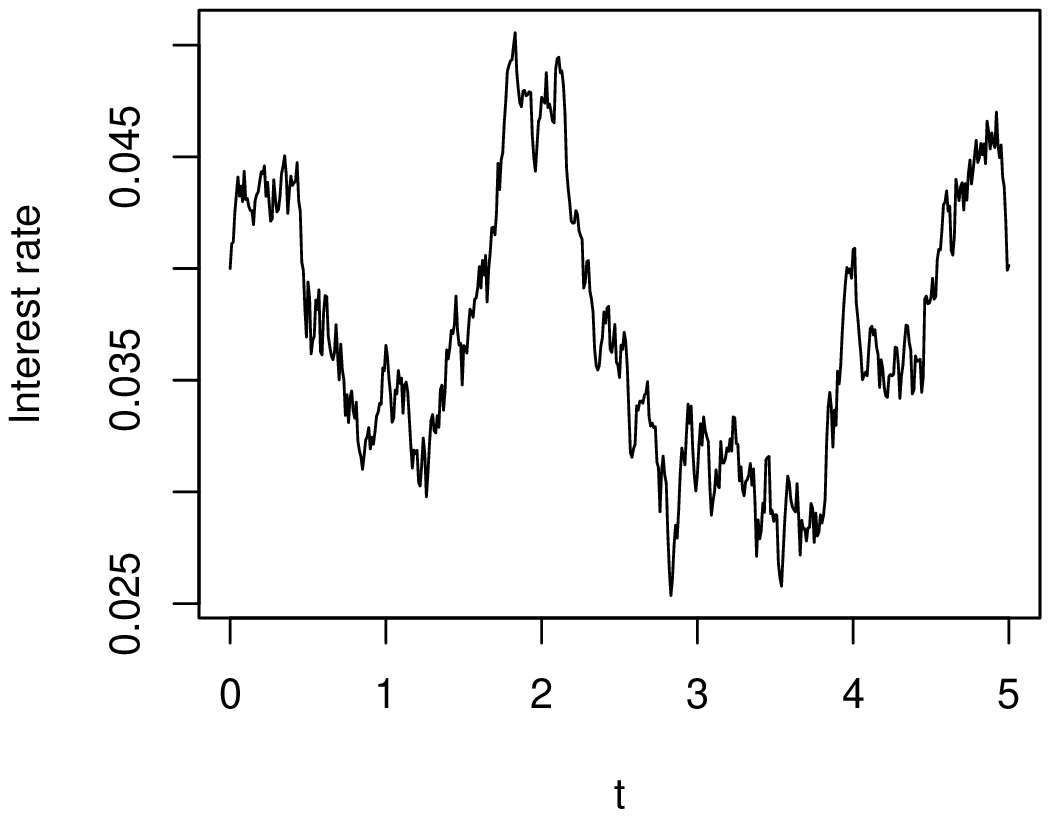}}
	\caption{ A pair of sample paths of (a) mortality rate and (b) interest rate}
	\label{sample paths}
\end{figure}

We hedge with the following two models. 
\begin{itemize}
\item[\textbullet]Model 1: Above assumption  with $K(t)=\frac{t^{\alpha-1}}{\Gamma(\alpha)}$ (Volterra mortality model); 
\item[\textbullet]Model 2: Above assumption  with $K(t)= 1$ (Markovian mortality model). 
\end{itemize}

Our objective is to hedge with $\phi = 3000$ over a horizon of 5 years using a zero-coupon longevity bond and a zero-coupon bond with a maturity time $T = 15$.  The initial value of wealth process is set to 2000. The optimal hedging strategies are calculated according to \eqref{u1} and \eqref{u2}. The longevity bond price and bond price are calculated by assuming constant market price of risks $\varphi = 0.1$ and $\vartheta = 0.1$. 

The optimal hedging strategies and corresponding wealth processes under the two models are plotted in Figures \ref{hedging strategy} and \ref{wealth:fig}, respectively. Once the mortality rate has the LRD feature, our hedging strategy significantly outperforms its Markovian counterpart and the unhedged position. Numerically, the objective  function value for Model 1 is -3622443 which is less than -3620889, the value for Model 2. As our goal is to minimize the MV objective, the smaller the number the better performance in terms of the objective function. If one is concerned about the risk level or the variance here, we report that the variance of the terminal wealth  is 66120 under Model 1 and 66317 under Model 2. The LRD hedging strategy prevails, too.
\textcolor{black}{We stress that this does not mean that the LRD hedging must be better in reality. Instead, we want to demonstrate the potential loss in hedging effectiveness with the Markovian model once the mortality rate has the LRD feature.}

\textcolor{black}{Although we set $\alpha = 1.33$ (or $H = 0.83$) in this numerical experiment,   the value of $\alpha$ can be calibrated or estimated in practice by using the pricing formulas we provide. Therefore, this study offers the option of choosing between Volterra and Markovian mortality models when dealing with longevity hedging in reality. Our proposed model renders a practical, flexible approach to the choice of $\alpha$. } 

\begin{figure}[H]
	\centering
	\subfigure[]{\includegraphics[width=7cm]{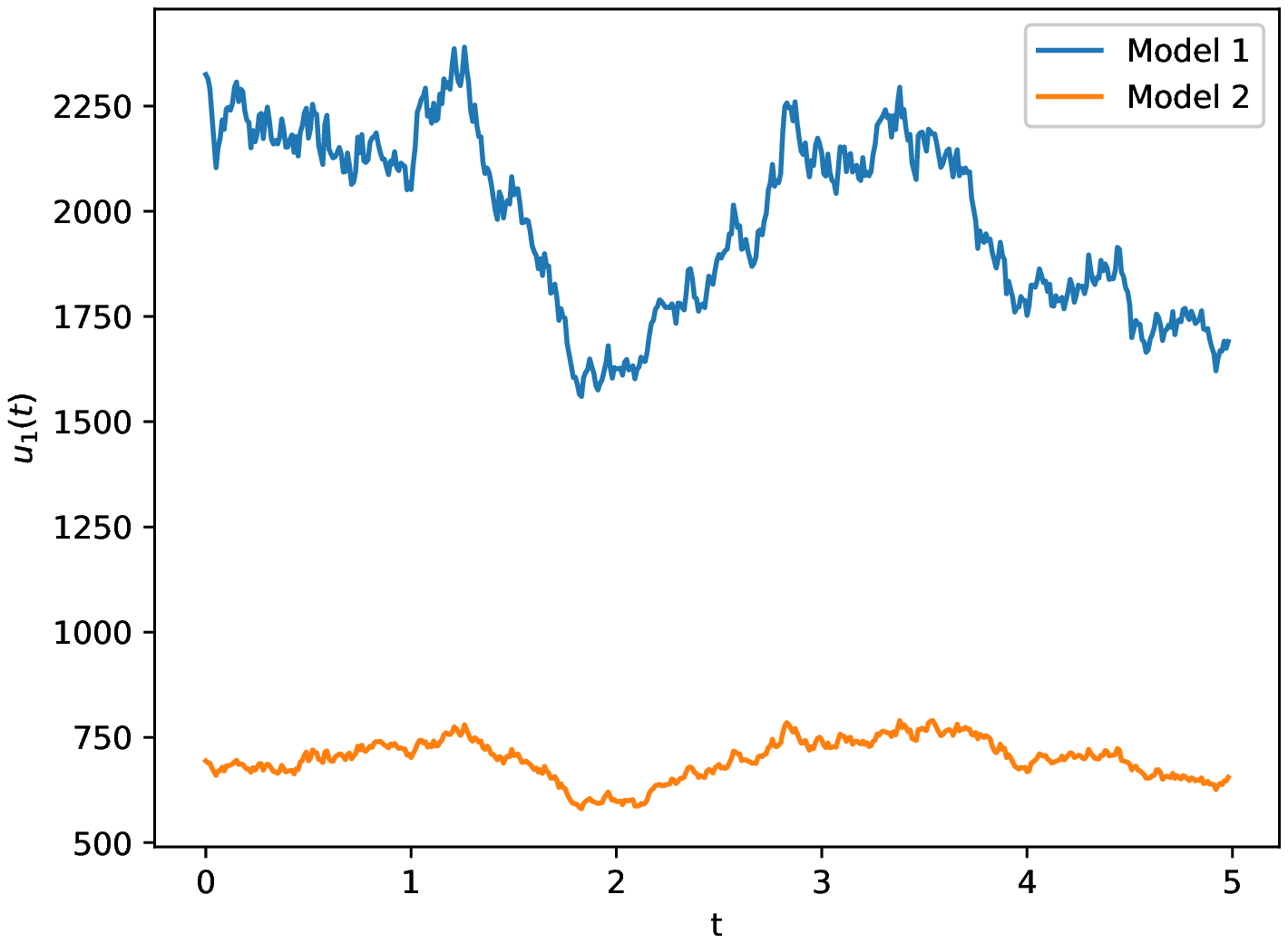}}
	\hspace{0ex}
	\subfigure[]{\includegraphics[width=7cm]{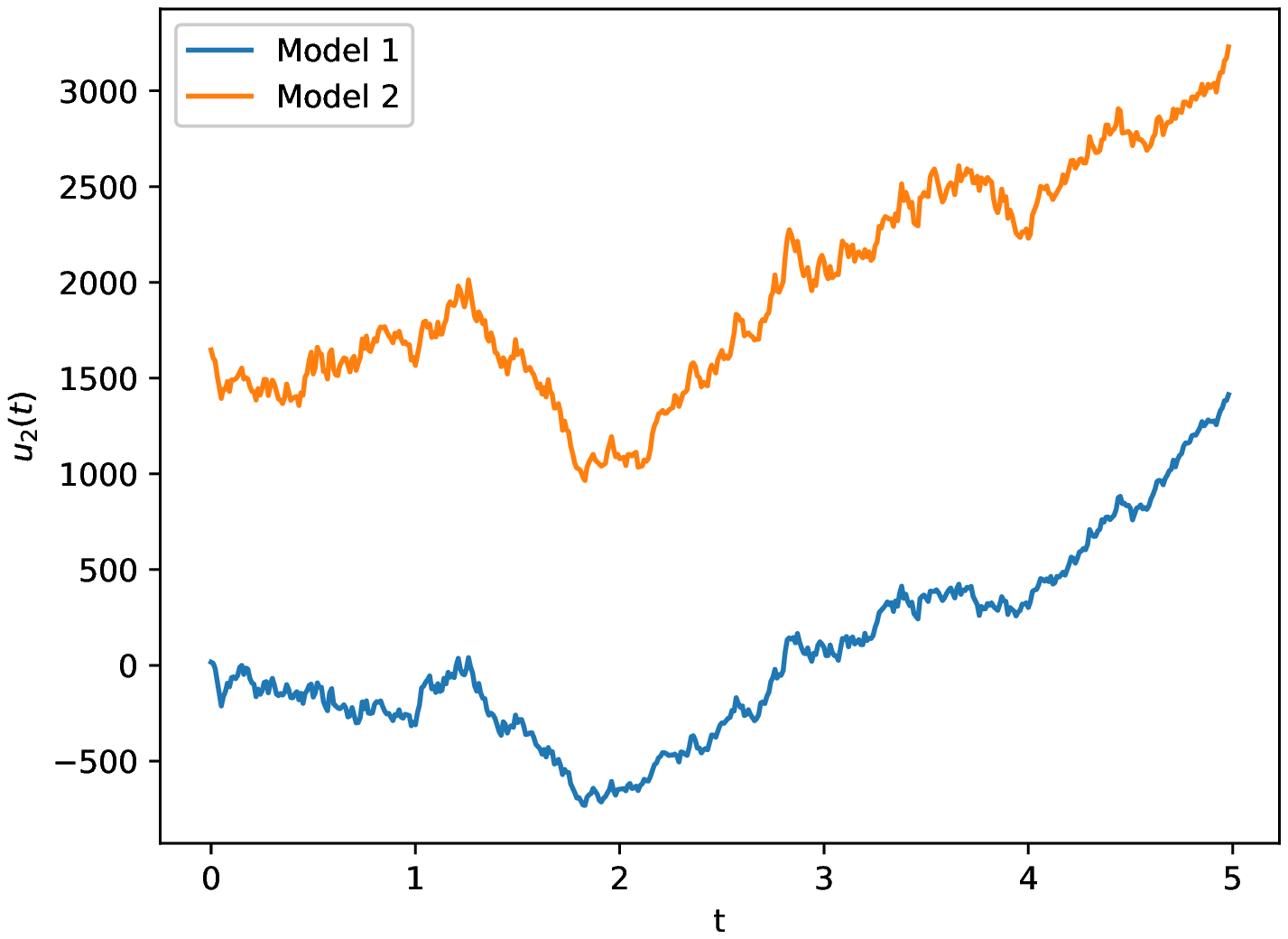}}
	\caption{Optimal hedging strategy (a) $u_1(t)$ and (b) $u_2(t)$ }
	\label{hedging strategy}
\end{figure}

\begin{figure}[H]
	\centering
	\includegraphics[width=9cm]{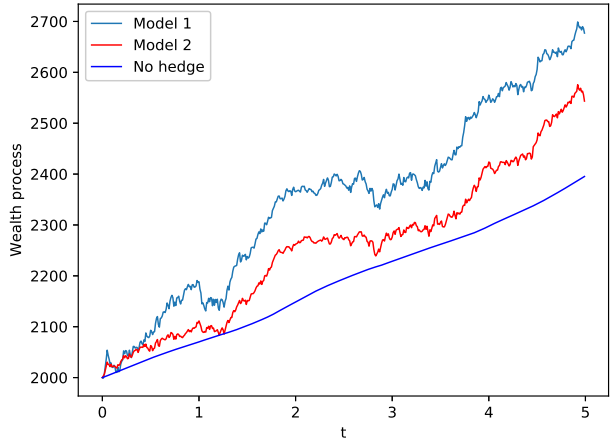}
	\caption{Wealth processes}
	\label{wealth:fig}
\end{figure}

\section{Conclusion}\label{Conclusion}
In this paper, we propose a tractable continuous-time mortality rate model that incorporates the LRD feature. Using our model, we derive novel closed-form solutions to the survival probability and prices of several basic insurance products. In addition, our model enables us to investigate an optimal longevity hedging strategy via the BSDE framework. Therefore, the key advantages of our model are its tractability for pricing and risk management as well as its ability to capture the LRD feature. Our numerical experiments show that LRD has significant effects for insurance pricing and hedging. The new longevity hedging strategy improves the hedging effectiveness when the mortality rate observes the LRD feature. 

\begin{appendices}
\addtocontents{toc}{\protect\setcounter{tocdepth}{0}}
\renewcommand{\appendixname}{Appendix~\Alph{section}}
\section{Transformation of Markov affine processes}
\label{appendix:affine}
We now give the ODEs which the coefficients $\widetilde{\alpha}$ and $\widetilde{\beta}$ solve appearing in Section 2 and 4.  
A $\mathbb{R}^k$-valued affine diffusion $Z$ is a $\mathbb{F}$-Markovian process specified as the strong solution to the following SDE:
\[dZ_t = \widetilde{b}(Z_t)dt +\widetilde{\sigma}(Z_t)dW'_t,\]
where $W'_t$  is a $\mathbb{F}$-standard $k$-dimensional Brownian motion. We require the covariance matrix $\widetilde{a}(Z)=\widetilde{\sigma}(Z)\widetilde{\sigma}(Z)^\top$ and the drift $\widetilde{b}(Z)$ to have affine dependence on $Z$ as in Definition \ref{def1}. That is
\begin{align}
\widetilde{a}(Z)&=\widetilde{A}^0+Z_1\widetilde{A}^1+\cdots+Z_k\widetilde{A}^k,\notag\\
\widetilde{b}(Z)&=\widetilde{b}^0+Z_1\widetilde{b}^1+\cdots+Z_k\widetilde{b}^k,\notag
\end{align}
for some $k$-dimensional symmetric matrices $\widetilde{A}^i$ and vectors $\widetilde{b}^i$. For convenience, we set $\widetilde{A}_1=(\widetilde{A}^1,\cdots,\widetilde{A}^k)$ and $\widetilde{b}_1=(\widetilde{b}^1,\cdots,\widetilde{b}^k)$. As shown in \cite{DPS}, for any $c_1,c_2\in \mathbb{C}^k$ and $c_3\in \mathbb{C}$, given $T>t$ and affine function $\Lambda(t,x) = \lambda_0(t) +\lambda_1(t)\cdot Z$ ($\lambda_0$ and $\lambda_1$ are  bounded continuous functions), under technical conditions we have
\[\mathbb{E}[e^{-\int_{t}^{T}\Lambda(s,Z_s)ds}e^{c_1\cdot Z_T}(c_2\cdot Z_T + c_3)|\mathcal{F}_t]= e^{\widetilde{\alpha}(t)+\widetilde{\beta}(t)\cdot Z_t}[\hat{\alpha}(t)+\hat{\beta}(t)\cdot Z_t],\]
where the functions $\widetilde{\alpha}(\cdot)\doteq\widetilde{\alpha}(\cdot, T)$ and $\widetilde{\beta}(\cdot)\doteq\widetilde{\beta}(\cdot, T)$ solve the following ODEs:
\[\dot{\widetilde{\beta}}(t)=\lambda_1(t)-\widetilde{b}_1(t)^\top\widetilde{\beta}(t)-\frac{1}{2}\widetilde{\beta}(t)^\top\widetilde{A}_1(t)\widetilde{\beta}(t),\]
\[\dot{\widetilde{\alpha}}(t)=\lambda_0(t)-\widetilde{b}^0(t)\cdot\widetilde{\beta}(t)-\frac{1}{2}\widetilde{\beta}(t)^\top\widetilde{A}^0(t)\widetilde{\beta}(t),\]
with boundary conditions $\widetilde{\alpha}(T)=0$ and $\widetilde{\beta}(T)=c_1$; the functions $\hat{\alpha}(\cdot)\doteq \hat{\alpha}(\cdot;c_1,c_2,c_3,T)$ and $\hat{\beta}(\cdot)\doteq \hat{\beta}(\cdot;c_1,c_2,c_3,T)$ are the solutions to the following ODEs:
\[\dot{\hat{\beta}}(t)=-\widetilde{b}_1(t)^\top\hat{\beta}(t)-\widetilde{\beta}(t)^\top\widetilde{A}_1(t)\hat{\beta}(t),\]
\[\dot{\hat{\alpha}}(t)=-\widetilde{b}^0(t)\cdot\hat{\beta}(t)-\widetilde{\beta}(t)^\top\widetilde{A}^0(t)\hat{\beta}(t),\]
with boundary conditions $\hat{\alpha}(T)=c_3$ and $\hat{\beta}(T)=c_2$.
\section{Some Proofs}
\label{appendix:proof}
\subsection*{Proof of Theorem \ref{theorem1}}
	Under our model, from \eqref{mu0}, 
	\[\mathbb{E}[e^{-\int_{t}^{T} \mu_sds}|\mathcal{F}_t] = \mathbb{E}[e^{-\int_{t}^{T}m(s)+\eta X_s}ds|\mathcal{F}_t] = e^{-\int_{t}^{T}m(s)ds}\mathbb{E}[e^{-\int_{t}^{T}\eta X_sds}|\mathcal{F}_t].\]
	As $X_t$ has the affine structure specified in Definition \ref{def1}, by application of Lemma 4.2 and  Theorem 4.3 provided in \cite{Jaber}, we have
	\[\mathbb{E}[e^{-\int_{t}^{T}\eta X_sds}|\mathcal{F}_t]=e^{\int_{0}^{t}\eta X_sds}\mathbb{E}[e^{-\int_{0}^{T}\eta X_sds}|\mathcal{F}_t] = e^{\int_{0}^{t}\eta X_sds}{\rm exp}(Y_t(T)),\]
	where $Y_t(T)$ is the Markovian process  defined in \eqref{Y} or equivalently \eqref{Y2} in Theorem \ref{theorem1}. Then, for $T>t\geq 0$, we have
	\begin{equation}
	\mathbb{E}[e^{-\int_{t}^{T} \mu_sds}|\mathcal{F}_t]= e^{-\int_{t}^{T}m(s)ds}e^{\int_{0}^{t}\eta X_sds}{\rm exp}(Y_t(T)).\nonumber
	\end{equation}
	Notice that $-\int_{t}^{T}m(s)ds+\int_0^t\eta X_sds = -\int_{0}^{T}m(s)ds+\int_0^t\mu_sds$. Hence, 
	\begin{equation}\label{g}
	\mathbb{E}[e^{-\int_{t}^{T} \mu_sds}|\mathcal{F}_t]= e^{-\int_{0}^{T}m(s)ds}e^{\int_{0}^{t}\mu_sds}{\rm exp}(Y_t(T))=g(t,T).
	\end{equation}
	By taking the derivative of $g(t, T)$ with respect to $T$, we get
	\begin{equation}\label{partial g}
	-\frac{\partial g(t,T)}{\partial T}=\mathbb{E}[e^{-\int_{t}^{T}\mu_sds}\mu_T|\mathcal{F}_t],~ T>t.
	\end{equation}
	Then, by combining the Equations \eqref{g} and \eqref{partial g}, the result in \eqref{eq} follows.
\subsection*{Proof of Theorem \ref{theorem 2} and Proposition \ref{proeta}} 
	For $P(t)$,  it is obvious that $P(t)>0$, $P(T_0) = 1$,  and
	\begin{align}
	\begin{split}
	P^{-1}(t) &= e^{\int_{t}^{T_0}\vartheta^2(s)+ \varphi^2(s)ds}\hat{\mathbb{E}}[e^{-2\int_{t}^{T_0}r(s)ds}|\mathcal{F}_t]\\
	&= e^{\int_{t}^{T_0}\vartheta^2(s)+ \varphi^2(s)ds}{\rm exp}(\alpha_1(t,T_0)+ \beta_1(t, T_0)r(t)).\notag
	\end{split}
	\end{align} 
	Under our setting, $\nu(t)^\top\Sigma(t)^{-1}\nu(t)= \vartheta^2(t) + \varphi^2(t)$.
    Then, by applying It\^o's formula, we get
	\begin{align}
	\begin{split}
	dP^{-1}(t)
	&= P^{-1}(t)(2r(t)-\vartheta^2(t) - \varphi^2(t))dt - P^{-1}(t)\widetilde{\eta}_1(t)^\top d\hat{\bm{W}}(t)\\
	&= P^{-1}(t)(2r(t)-\nu(t)^\top\Sigma(t)^{-1}\nu(t)-\tilde{\eta}_1(t)^\top\xi(t))dt - P^{-1}(t)\widetilde{\eta}_1(t)^\top d\bm{W}(t),  \notag
	\end{split}
	\end{align}
	where $\widetilde{\eta}_1 = -\beta_1(t, T_0)\sigma_r = \eta_1/P(t)$ and $\eta_1(t)$ is defined in Proposition \ref{proeta}. Notice that $\xi(t) = 2\sigma_S\Sigma(t)^{-1}\nu(t)$ and $\sigma^\perp\widetilde{\eta}_1= 0$. Then by It\^o's lemma again, $P(t)$ satisfies
	\begin{align}\label{BSDE:P}
	&dP(t)= \bigg\{\left[-2r(t) + \nu(t)^\top\Sigma(t)^{-1}\nu(t)\right]P(t) + 2\nu(t)^\top\Sigma(t)^{-1}\sigma_S(t)^\top\eta_1(t)\notag\\
	&+ \eta_1(t)^\top\sigma_S(t)\Sigma(t)^{-1}\sigma_S(t)^\top\eta_1(t)\frac{1}{P(t)}
	\bigg\}dt + \eta_1(t)^\top d\bm{W}(t).\notag
	\end{align}	
	For $Q(t)$,  it is obvious that $Q(T_0) = -c$ and $\frac{Q(t)}{P(t)} = -[Q_0(t) + c\acute{\mathcal{B}}(t, T_0)]$. 
	By applying  It\^o's lemma to $\acute{\mathbb{E}}[\mu_s|\tilde{\mathcal{H}}_t]$ on time $t$, we have
	\[d\left(\acute{\mathbb{E}}[\mu_s|\tilde{\mathcal{H}}_t]\right)= E_{B}(s-t)\sigma_{\mu}d\acute{W}_t,\]
    where $E_{B}$ is defined in Theorem \ref{theorem1} with $B = -b^1$.  By applying Ito's lemma to $\acute{\mathbb{E}}\left[\left.e^{-\int_{0}^{s}\mu_{\tau}d\tau}\right|\tilde{\mathcal{H}}_t\right] = {\rm exp}(Y_t^2(T))$ on time $t$, we get
	\[d\left(\acute{\mathbb{E}}\left[\left.e^{-\int_{0}^{s}\mu_{\tau}d\tau}\right|\tilde{\mathcal{H}}_t\right] \right)= \acute{\mathbb{E}}\left[\left.e^{-\int_{0}^{s}\mu_{\tau}d\tau}\right|\tilde{\mathcal{H}}_t\right]\psi_2(s-t)\sigma_{\mu}d\acute{W}_{t}\]
	with $\psi_2\in \mathcal{L}^2([0,s],\mathbb{C})$ solving the Riccati equation $\psi_2 = (-1 - \psi_2b^1)*K$.
	From \eqref{expr}, $d\acute{\mathcal{B}}(t,s)=\acute{\mathcal{B}}(t,s)r(t)dt - \acute{\mathcal{B}}(t,s)\beta_2(t,s)\sigma_r d\acute{W}'_t$. Then, by applying It\^o's lemma to $\frac{Q(t)}{P(t)}$, we have
	\begin{align}
	\begin{split}
	d\left[\frac{Q(t)}{P(t)}\right]&=\left[\frac{Q(t)}{P(t)}r(t) + k_1\mu(t)z + \pi(t)\right]dt +  \left[\widetilde{\eta}_2(t)^\top-\frac{Q(t)}{P(t)}\widetilde{\eta}_1(t)^\top\right]d\acute{\bm{W}}(t)\\
	&= \left[\frac{Q(t)}{P(t)}r(t) + k_1\mu(t)z + \pi(t) + \widetilde{\eta}_2(t)^\top\zeta(t)-\frac{Q(t)}{P(t)}\widetilde{\eta}_1(t)^\top\zeta(t)dt\right]\\
	&+ \left[\widetilde{\eta}_2(t)^\top-\frac{Q(t)}{P(t)}\widetilde{\eta}_1(t)^\top\right]d\bm{W}(t), \notag
	\end{split}
	\end{align}
	where $\widetilde{\eta}_2 = \eta_2/P(t)$ and $\eta_2$ is shown in Proposition \ref{proeta}. Notice that $\zeta(t) = \sigma_S\Sigma(t)^{-1}\nu(t)$ and $\sigma^\perp\widetilde{\eta}_1= 0$.  Then, by It\^o's lemma again, $Q(t)$ satisfies
	\begin{align}\label{BSDE:Q}
	&dQ(t) = \bigg\{\left[-r(t) + \nu(t)^\top\Sigma(t)^{-1}\left(\nu(t) + \frac{\sigma_S(t)^\top\eta_1(t)}{P(t)}\right)\right]Q(t) + P(t)(k_1\mu(t)z + \pi(t))\notag\\
	&+ \eta_2(t)^\top\sigma_S(t)\Sigma(t)^{-1}\left(\nu(t) + \frac{\sigma_S(t)^\top\eta_1(t)}{P(t)}\right)\bigg\}dt + \eta_2(t)^\top d\bm{W}(t).\notag
	\end{align}
	Finally, we consider the process $P(t)\left(M(t)+ \frac{Q(t)}{P(t)}\right)^2 + I(t)$.	By It\^o's formula, we have
	\begin{align}
	\begin{split}
	&d\left[P(t)\left(M(t)+ \frac{Q(t)}{P(t)}\right)^2 + I(t)\right] = d[P(t)M^2(t) + 2d[Q(t)M(t)] + d[Q^2(t)P^{-1}(t)] + dI(t)\\
	&= P(t)(u(t)-u_c^*(t))^\top\sigma_S(t)^\top\sigma_S(t)(u(t)-u_c^*(t))dt + \{\cdots\}d\bm{W}(t)+ \{\cdots\}d\mathcal{K}(t)\\
	&=P(t)||\sigma_S(t)(u(t)-u_c^*(t))||^2dt + \{\cdots\}d\bm{W}(t)+ \{\cdots\}d\mathcal{K}(t),\notag
	\end{split}
	\end{align}
	where $u^*_c(t)$ is defined in \eqref{ustar} and $\mathcal{K}(t) = \tilde{N}(t) - k_1\int_{0}^{t}\mu(s)ds$ is a martingale with respect to the filtration $\tilde{\mathcal{H}}_t$.  Then, there exists an increasing sequence of stopping times $\{\tau_i\}$ such that $\tau_i \uparrow T_0$ as $i \to \infty$ and 
	\begin{align}
	\begin{split}
	&\mathbb{E}\left[P(T_0\wedge\tau_i)\left(M(T_0\wedge\tau_i)+ \frac{Q(T_0\wedge\tau_i)}{P(T_0\wedge\tau_i)}\right)^2 + I(T_0\wedge\tau_i)\right]\\
	&= P(0)(Y(0)+\frac{Q(0)}{P(0)})^2 + I(0)+ \mathbb{E}\left[\int_{0}^{T_0\wedge\tau_i}P(t)||\sigma_S(t)(u(t)-u_c^*(t))||^2dt\right].  \notag
	\end{split}
	\end{align}
	From \eqref{P} and \eqref{Q}, we can see $P(t)$ and $Q(t)$ are bounded. From \eqref{I(t)}, $I(t)$ is also bounded.  As $\mathbb{E}[{\rm sup}_{t\in[0, T_0]}|Y^2(t)|^2]< \infty$,  according to the Dominance Covergence Theorem and Monotone Convergence Theorem as $i \to \infty$, we have
	\begin{align}
	\begin{split}
	&\mathbb{E}\left[P(T_0)\left(M(T_0)+ \frac{Q(T_0)}{P(T_0)}\right)^2 + I(T_0)\right]\\
	&=P(0)(Y(0)+\frac{Q(0)}{P(0)})^2 + I(0)+ \mathbb{E}\left[\int_{0}^{T_0}P(t)||\sigma_S(t)(u(t)-u_c^*(t))||^2dt\right].\notag
	\end{split}
	\end{align}
Thus, the objective function $\mathbb{E}[(M(T_0)-c)^2]= \mathbb{E}\left[P(T_0)\left(M(T_0)+ \frac{Q(T_0)}{P(T_0)}\right)^2 + I(T_0)\right]$  is minimized when $u(t) =u^*_t$. $P(0)(Y(0)+\frac{Q(0)}{P(0)})^2 + I(0)$ is the  optimal objective value.

\subsection*{Proof of Proposition \ref{pro:u}}
By Theorem \ref{theorem 2}, the optimal objective value is given by $P(0)(M(0)+ \frac{Q(0)}{P(0)})^2 + I(0)$  for any given $c$. Take $c = \bar{M} + \frac{\phi}{4}$ and substitute $Q(0) = -P(0)[Q_0(0)+c\acute{\mathcal{B}}(0,T_0)]$, then the external minimization problem in \eqref{dmin} becomes 
\[  \mathop{{\rm min}}\limits_{\bar{M}\in \mathbb{R}}P(0)(M(0)-( \bar{M} + \frac{\phi}{4})\acute{\mathcal{B}}(0,T_0)- Q_0(0))^2 + I(0) -\frac{\phi}{2}\bar{M}-\frac{\phi^2}{16},\]
which is a quadratic function attaining its minimum at 
\[\bar{M}^* = \frac{\frac{\phi}{4}(1-P(0)\acute{\mathcal{B}}^2(0,T_0))+P(0)\acute{\mathcal{B}}(0,T_0)(M(0)-Q_0(0))}{P(0)\acute{\mathcal{B}}^2(0,T_0)}.\]
By substituting $c = \bar{M}^* + \frac{\phi}{4}$, the result follows.
\end{appendices}


\end{document}